\PassOptionsToPackage{capitalize,noabbrev}{cleveref}

\documentclass[]{preprint}

\usepackage[toc,page,header]{appendix}

\usepackage{microtype}
\usepackage{graphicx}
\usepackage{subcaption}
\usepackage{booktabs} 

\usepackage{hyperref}

\usepackage{amsmath}
\usepackage{amssymb}
\usepackage{mathtools}
\usepackage{amsthm}

\theoremstyle{plain}
\newtheorem{theorem}{Theorem}[section]
\newtheorem{lemma}[theorem]{Lemma}

\newtheorem{proposition}[theorem]{Proposition}

\theoremstyle{definition}
\newtheorem{definition}{Definition}[section]

\theoremstyle{remark}

\usepackage{hyperref}
\usepackage{url}
\usepackage{xcolor}
\usepackage{pifont}

\usepackage[utf8]{inputenc}
\usepackage{csquotes} 
\usepackage{enumitem}
\usepackage{multirow}
\usepackage{array}
\usepackage{colortbl}

\usepackage{arydshln} 

\usepackage{wrapfig}
\usepackage{subcaption}  
\usepackage{dblfloatfix}

\usepackage[most]{tcolorbox}
\usepackage{xcolor}
\usepackage{graphicx}

\definecolor{cvprblue}{RGB}{0,102,204} 

\usepackage{amsmath,bm}

\usepackage{color}
\usepackage{tikz}
\usetikzlibrary{shapes, arrows.meta, positioning, calc, fit, backgrounds}

\usepackage{etoc}

\definecolor{sybcblue}{HTML}{004488}  
\definecolor{brickred}{HTML}{BB5566}  
\definecolor{softgray}{HTML}{666666}  

\definecolor{graybg}{gray}{0.9}

\usepackage{hyperref}
\hypersetup{
    colorlinks=true,
    linkcolor=black,
    filecolor=magenta,      
    urlcolor=seedblue, 
    pdftitle={Modality Gap},
}

\usepackage{tikz}

\usepackage{cleveref}

\definecolor{softred}{RGB}{255,230,230}
\definecolor{softblue}{RGB}{230,240,255}
\definecolor{slowcolor}{RGB}{0,100,200} 

\usepackage{amsmath}   
\usepackage{amssymb}   
\usepackage{amsthm}    

\usepackage{tcolorbox}

\usepackage{enumitem}

\usepackage{amsmath,amssymb,amsthm,mathtools}

\usepackage[utf8]{inputenc} 
\usepackage[T1]{fontenc}    
\usepackage{hyperref}       
\usepackage{url}            
\usepackage{booktabs}       
\usepackage{amsfonts}       
\usepackage{nicefrac}       
\usepackage{microtype}      
\usepackage{xcolor}         

\usepackage[utf8]{inputenc} 
\usepackage[T1]{fontenc}    
\usepackage{hyperref}       
\usepackage{url}            
\usepackage{booktabs}       
\usepackage{amsfonts}       
\usepackage{nicefrac}       
\usepackage{microtype}      
\usepackage{xcolor}         

\usepackage{tikz}
\usetikzlibrary{arrows.meta, calc, shapes.geometric, decorations.pathreplacing, positioning}

\usepackage[utf8]{inputenc} 
\usepackage[T1]{fontenc}    
\usepackage{url}            
\usepackage{booktabs}       
\usepackage{amsfonts}       
\usepackage{nicefrac}       
\usepackage{microtype}      
\usepackage{xcolor}         

\usepackage{amsmath}
\usepackage{amssymb}

\usepackage[utf8]{inputenc} 
\usepackage[T1]{fontenc}    
\usepackage{url}            
\usepackage{booktabs}       
\usepackage{amsfonts}       
\usepackage{nicefrac}       
\usepackage{microtype}      

\usepackage{wrapfig}

\usepackage{pifont}

\usepackage[table]{xcolor}
\usepackage{booktabs}
\usepackage{multirow}
\usepackage{arydshln}

\usepackage{tikz}
\usetikzlibrary{spy, arrows.meta, calc, decorations.pathreplacing, bending, shadows} 

\usepackage{amsmath,amssymb}
\usepackage{tikz}
\usetikzlibrary{
  arrows.meta,
  positioning,
  calc,
  fit,
  backgrounds,
  decorations.pathreplacing,
  shapes.geometric,
  shapes.misc,
  matrix
}

\usepackage{tikz}

\usepackage{xcolor}

\usetikzlibrary{
    arrows.meta,      
    calc,             
    shapes.geometric, 
    fadings,          
    bending           
}

\usepackage{tikz}
\usetikzlibrary{arrows.meta, calc, decorations.pathreplacing, positioning}
\usepackage{amsmath, amssymb}

\usepackage{tikz}
\usetikzlibrary{arrows.meta, calc, angles, quotes}

\usepackage{booktabs}
\usepackage{multirow}
\usepackage{graphicx}
\usepackage[table]{xcolor}

\definecolor{graybg}{gray}{0.92}

\usepackage{tikz}
\usetikzlibrary{arrows.meta, calc}
\usepackage{amsmath, amssymb}

\definecolor{src}{HTML}{1B4965}
\definecolor{tgt}{HTML}{C44536}
\definecolor{gold}{HTML}{B8860B}
\definecolor{ell}{HTML}{C5D8E8}
\definecolor{gridg}{HTML}{DCDCDC}
\definecolor{note}{HTML}{6E7F8E}

\title{Anisotropic Modality Align}

\author[1,2]{Xiaomin Yu}
\author[3]{Yijiang Li}
\author[4]{Yuhui Zhang}
\author[2]{Hanzhen Zhao}
\author[4]{Yue Yang}
\author[5]{Hao Tang}

\makeatletter
\renewcommand\author[2][]{\addtolist[#1]{#2}{\authorlist}{\authorformat}{\\[0pt]}}
\makeatother

\author[6]{Yue Song}

\makeatletter
\renewcommand\author[2][]{\addtolist[#1]{#2}{\authorlist}{\authorformat}{, }}
\makeatother
\author[2]{Xiaobin Hu}
\author[1]{Chengwei Qin}
\author[2]{Shuicheng Yan}
\author[1]{Hui Xiong}

\affiliation[1]{HKUST(GZ)}
\affiliation[2]{NUS}
\affiliation[3]{UCSD}
\affiliation[4]{Stanford}
\affiliation[5]{PKU}
\affiliation[6]{THU}

\newcommand{\method}{AnisoAlign}

\abstract{
Training multimodal large language models has long been limited by the scarcity of high-quality paired multimodal data. Recent studies show that the shared representation space of pretrained multimodal contrastive models can serve as a bridge, enabling models to perform multimodal training with unimodal data. However, the key premise of this paradigm remains insufficiently understood: can representations from different modalities be reliably interchanged? The core obstacle lies in the persistent \textbf{Modality Gap} in the shared space. In this work, we revisit the geometric nature of the modality gap. We find that modality representations already share compatible dominant semantic geometry. What truly hinders modality interchangeability is not a simple global shift, but an anisotropic residual structure concentrated along a small number of dominant directions. Based on this finding, we further propose the principle of anisotropic modality gap alignment: effective modality alignment should align with the target-modality distribution while preserving the semantic structure of the source modality. Guided by this principle, we propose an anisotropic geometric correction framework, \textbf{\method{}}, for unpaired modality alignment. This framework leverages the internal geometric prior of the target modality and performs bounded correction on source-modality representations, thereby constructing substitute representations in the target modality. Experiments confirm its benefits in both geometric diagnostics and text-only MLLM training. Overall, this work recasts the modality gap from an empirical observation into a correctable, structured geometric phenomenon and provides a new representation alignment perspective for training multimodal models with unimodal data.
}

\date{\today}
\checkdata[Leader]{Xiaomin Yu (\email{yuxm02@gmail.com})}
\correspondence{Yue Song, Xiaobin Hu, Chengwei Qin}

\checkdata[Github]{\url{https://github.com/Yu-xm/Modality_Gap_Theory.git}}

\begin{document}
\maketitle

\newpage

\section{Introduction}

Multimodal contrastive learning models \cite{radford2021learning,zhai2023sigmoid,huang2026llm2clippowerfullanguagemodel} typically map samples from different modalities into the same normalized representation space, so that semantically corresponding images and texts are close to each other in this space. However, a persistent phenomenon is that, even after large-scale contrastive pretraining, image and text representations often still maintain systematic geometric separation in the shared space. This phenomenon is commonly referred to as the \textbf{Modality Gap} \cite{liang2022mind,zhang2024connect,yu2026modalitygapdrivensubspacealignment}. Some studies exploit this property by geometrically correcting the source-modality representations in the shared representation space and aligning them with the target modality, thereby enabling multimodal large language models (MLLMs) \cite{liu2023visual,he2024efficientmultimodallearningdatacentric} to be trained using single-modality data and decoupling the dependence on paired multimodal data \cite{chen2024sharegpt4v,he2024efficientmultimodallearningdatacentric}.

However, existing methods still lack a systematic characterization of the modality gap: do the two modalities share compatible dominant semantic geometry? Does the remaining discrepancy mainly arise from a global centroid shift, or is it concentrated as structured residuals along specific directions? What kind of correction can both preserve source-modality semantics and move representations into the distributional support of the target modality? Answering these questions is particularly critical for unpaired modality alignment, because in the absence of paired supervision, alignment methods must rely on the intrinsic geometric structure of modality distributions to constrain the correction process \cite{zhang2024connect,yu2026modalitygapdrivensubspacealignment}. This leads to the basic question studied in this work: \textbf{What kind of geometric discrepancy is the modality gap?}

To answer this question, we revisit the modality gap through a sequence of geometric diagnostics. The results show that image and text representations are not arbitrary, unrelated distributions in the shared space. Instead, the two modalities already possess compatible dominant semantic geometry: their covariance spectra exhibit similar long-tail decay, and their principal subspace overlap is significantly higher than the random baseline. This indicates that multimodal contrastive pretraining has already established a shared dominant geometric backbone between the two modalities.

However, the remaining modality gap cannot be simply explained by a global centroid bias. We find that, after globally shifting text representations to the image-modality centroid, most of the cross-modal discrepancy still remains. Further spectral analysis shows that the mean-corrected residual is not isotropic noise, but an anisotropic structure concentrated along a small number of dominant directions. In other words, \emph{the modality gap mainly appears as a low-effective-dimensional, direction-dependent residual, rather than an unstructured random offset}.

These diagnostics naturally lead to a modality alignment principle: effective modality alignment should not merely minimize global distributional discrepancy, but should satisfy two requirements simultaneously. First, it must preserve the semantic geometry already present in the source modality. Second, it must correct the dominant anisotropic residual directions that prevent the source modality from being compatible with the target-modality distribution. Matching only the target distribution may destroy semantic correspondence; preserving only source semantics may fail to enter the distributional support of the target modality. Therefore, \emph{modality alignment is essentially a structured geometric correction problem between semantic preservation and target-distribution compatibility}.

Based on this principle, we propose an anisotropic alignment method, \textbf{\method{}}, for unpaired modality alignment. The method first constructs a fixed dominant subspace decomposition, dividing the shared space into a statistically dominant subspace and its orthogonal complement. Then, within the dominant subspace, we introduce a blockwise polar parameterization that decomposes representations into radius and phase structures, thereby explicitly modeling anisotropic geometric variations along dominant directions. To avoid directly learning an unstable cross-modal mapping, we first pretrain a periodic phase prior using only target-modality samples, which captures the internal phase statistics of the target modality. Then, in the second stage, we perform bounded residual correction on source-modality representations, so that they gradually satisfy the target-modality prior while preserving instance-level semantic structure.

Extensive experiments support this view. At the representation level, \method{} better matches the target-modality geometry while preserving source-modality semantics, achieving balanced local support compatibility and reducing dominant anisotropic residual directions. At the MLLM level, the resulting substitute representations lead to stronger performance in both fully text-only training and text-only pretraining before visual instruction tuning. These results suggest that modality alignment is better understood as structured anisotropic geometric correction, and that large-scale text-only data can be leveraged as a useful substitute for paired image-text supervision. 

\section{Preliminaries}

\begin{definition}[\textbf{Modality Gap}.] Let $X_0$ and $Y_0$ denote two distinct modalities, let $f_x:X_0\to S^{d-1}$ and $f_y:Y_0\to S^{d-1}$ be pretrained encoders into a shared normalized representation space, and write $X=f_x(X_0)$ and $Y=f_y(Y_0)$. Let $\sigma:S^{d-1}\to S$ denote the latent semantic map, where $S$ is an abstract semantic space and $\sigma(z)$ denotes the semantic label associated with $z$. If, for semantically corresponding cross-modal representations $x\in X$ and $y\in Y$, $\sigma(x)=\sigma(y)$ while $x$ and $y$ need not coincide geometrically, and this discrepancy is systematic at the distribution level, $\mu_x\neq\mu_y$ or $\Sigma_x\neq\Sigma_y$, where $\mu$ and $\Sigma$ denote the mean and covariance, respectively, then such a systematic cross-modal geometric discrepancy is called the \textbf{Modality Gap} phenomenon. \end{definition}

\begin{definition}[\textbf{Modality Align}.] In a shared representation space exhibiting modality gap, let $Y$ be the source modality and $X$ the target modality. \textbf{Modality Align} seeks a mapping $T:\mathbb R^d\to\mathbb R^d$ that rectifies the cross-modal geometric discrepancy such that, given only unpaired samples from $X$ and $Y$, for any $y\in Y$, $\sigma(T(y))=\sigma(y)$ and $P_{T(Y)\mid \sigma}\approx P_{X\mid \sigma}$. The transformed representation $T(y)$ is called a substitute representation of $y$ in the target modality. \end{definition}





\section{Modality Gap} \label{sec:3}

Two modalities in the shared embedding space often remain separated by a persistent modality gap. This raises a basic geometric question: \textbf{What kind of discrepancy is the modality gap?}



\subsection{Geometric Compatibility Across Modalities} \label{subsec:3.1}


\begin{wrapfigure}{r}{0.6\textwidth}
    \centering
    \includegraphics[width=\linewidth]{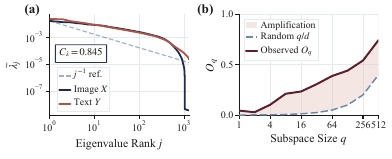}
    \caption{
        \textbf{Image and text modalities share compatible dominant geometry.}
        (a) The normalized covariance spectra of the two modalities exhibit similar long-tail decay, with spectral correlation $C_\lambda=0.845$.
        (b) Their principal subspace overlap is consistently above the random baseline $q/d$; at $q=128$, $O_{128}=0.441$ versus $q/d=0.100$, indicating shared non-random dominant directions.
        }
    \label{fig1}
\end{wrapfigure}

We first ask whether the two modalities have compatible global geometry in the shared representation space. This question is essential: if two embeddings were merely two arbitrary and unrelated distributions, then any geometric correction would not preserve semantic consistency. To test this, we compare the dominant covariance structure of the two modalities.

\textbf{Compatible Spectral Decay.}
Given $1\mathrm{M}$ paired image-text representations $\{(x_i,y_i)\}_{i=1}^{n}$, where $x_i\in X$, $y_i\in Y$. Let $\Sigma_x$ and $\Sigma_y$ denote the centered covariance matrices of the image and text modalities. We compare their covariance spectra by sorting the eigenvalues in descending order and defining the spectral correlation as $C_\lambda = \operatorname{corr}\left(\log\lambda(\Sigma_x),\log\lambda(\Sigma_y)\right)$. As shown in Fig.~\ref{fig1}(a), the normalized spectra of the two modalities exhibit similar long-tail decay. The spectral correlation reaches $C_\lambda=0.845$, indicating that image and text representations distribute their variance energy across dominant directions compatibly.

\textbf{Shared Principal Structure.}
Spectral similarity alone does not guarantee that the two modalities use the same directions. We therefore next ask whether their principal subspaces overlap. Let $U_x^{q}$ and $U_y^{q}$ denote the subspaces spanned by the top $q$ eigenvectors of $\Sigma_x$ and $\Sigma_y$, respectively. We define the subspace overlap as $O_q =
\frac{1}{q}
\left\|
(U_x^{q})^\top U_y^{q}
\right\|_F^2$. If the two subspaces were randomly unrelated, the expected overlap would be approximately $q/d$. However, Fig.~\ref{fig1}(b) shows that the observed $O_q$ is consistently above this random baseline across different subspace sizes. In particular, when $q=128$, we obtain $O_{128}=0.441$, whereas the random baseline is only $q/d=0.100$. Thus, image and text representations share a set of non-random dominant geometric directions.


\begin{tcolorbox}[boxrule=1pt, colframe=black!70, colback=gray!5, left=8pt, right=8pt, top=4pt, bottom=4pt]
\noindent\textbf{Conclusion 1. (Compatible Dominant Geometry).}
The modality gap does not mean that image and text representations have unrelated global geometry. Instead, the two modalities already share compatible dominant semantic structure in the shared representation space.
\end{tcolorbox}

\begin{figure}[htbp]
    \centering
    \includegraphics[width=\textwidth]{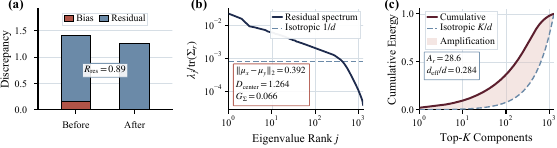}
    \caption{
        \textbf{The modality gap is dominated by an anisotropic residual.}
        (a) Mean correction removes only a small fraction of the cross-modal discrepancy, leaving a large residual gap.
        (b) The residual covariance spectrum deviates strongly from the isotropic baseline, with dominant eigen-directions.
        (c) Residual energy is concentrated in a low-effective-dimensional subspace, with anisotropy ratio $A_r=28.6$ and $d_{\mathrm{eff}}/d=0.284$.
        }
    \label{fig2}
    \vspace{-10pt}
\end{figure}

\subsection{Anisotropic Modality Gap}  \label{subsec:3.2}


Having established that the two modalities share compatible dominant geometry, we next ask what form the remaining modality gap takes. A natural hypothesis is that the gap is mainly a global centroid bias. Let $(\mu_x,\Sigma_x)$ and $(\mu_y,\Sigma_y)$ denote the empirical means and centered covariances of the two modalities, respectively. We measure centroid displacement and covariance-shape discrepancy as $G_\mu=\|\mu_x-\mu_y\|_2$ and
$G_\Sigma=\|\Sigma_x-\Sigma_y\|_F/(\|\Sigma_x\|_F+\epsilon)$.

\textbf{Centroid Bias Is Insufficient.}
If the modality gap were dominated by a global mean shift, then translating text representations to the image centroid should remove most of the cross-modal discrepancy. To test this hypothesis, we keep image representations fixed and apply mean correction to text representations as $y_i^{x}=y_i-\mu_y+\mu_x$. The paired residual after mean correction is $r_i=x_i-y_i^{x}=(x_i-\mu_x)-(y_i-\mu_y)$, with residual covariance $\Sigma_r=\frac{1}{n}\sum_{i=1}^{n}r_i r_i^\top$. Fig.~\ref{fig2}(a) confirms that the two modalities have a clear centroid displacement, with $G_\mu=0.392$. However, the covariance-shape discrepancy is also nonzero, with $G_\Sigma=0.066$, suggesting that the misalignment is not purely a difference in mean centers. Although text representations are globally shifted to the image centroid, the corrected paired distance remains high, $\widetilde{D}=1.264$. The residual ratio is $\widetilde{D}/D=0.89$. This rules out the simplest explanation that the modality gap is mainly a centroid bias.

\textbf{Anisotropic Residual.}
We next ask whether the remaining residual is isotropic noise. If this were the case, then its covariance would satisfy $\Sigma_r \approx \sigma^2 I$, and its normalized eigenvalue spectrum would be close to the flat isotropic baseline $1/d$. However, Fig.~\ref{fig2}(b) shows a different pattern. The residual spectrum has dominant eigen-directions whose energy is far above the isotropic average, followed by a long-tail decay. To quantify this deviation, we define the residual anisotropy ratio as \(A_r=\lambda_{\max}(\Sigma_r)/(\operatorname{tr}(\Sigma_r)/d)\), where $\lambda_{\max}(\Sigma_r)$ is the largest eigenvalue of the residual covariance. Fig.~\ref{fig2}(c) shows $A_r=28.6\gg 1$. Therefore, the residual gap is not random isotropic noise; it is strongly direction-dependent. This anisotropy is further reflected in residual energy concentration. We compute the cumulative energy explained by the top-$K$ residual eigen-directions, $E(K)=\sum_{j=1}^{K}\lambda_j(\Sigma_r)/\sum_{j=1}^{d}\lambda_j(\Sigma_r)$. As shown in Fig.~\ref{fig2}(c), the empirical curve lies far above the isotropic baseline $K/d$, indicating that residual energy is concentrated in a small number of dominant directions. We further compute the effective dimension 
$d_{\mathrm{eff}}(\Sigma_r)=\mathrm{tr}(\Sigma_r)^2/\mathrm{tr}(\Sigma_r^2)$, obtaining $d_{\mathrm{eff}}/d=0.284$, which confirms that the residual gap lies in a low-effective-dimensional anisotropic subspace.

\begin{tcolorbox}[boxrule=1pt, colframe=black!70, colback=gray!5, left=8pt, right=8pt, top=4pt, bottom=4pt]
\noindent\textbf{Conclusion 2 (Anisotropic Residual Gap).}
The modality gap is dominated by a structured residual: a direction-dependent anisotropic discrepancy concentrated in a low-effective-dimensional subspace.
\end{tcolorbox}

\subsection{Anisotropic Modality Alignment Principle}

The previous diagnostics reveal two facts. First, image and text representations already share compatible dominant semantic geometry. Second, the remaining modality gap is a low-effective-dimensional anisotropic residual. We therefore ask: \textbf{What should effective modality alignment preserve, and what should it correct?}

\begin{figure}[htbp]
    \centering
    \includegraphics[width=\textwidth]{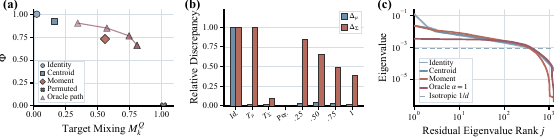}
    \caption{
        \textbf{Effective alignment requires both source semantic preservation and target distribution compatibility.}
        (a) Different transformations exhibit a trade-off between source instance consistency and target local mixing.
        (b) Centroid and moment corrections reduce global discrepancies, while random target replacement destroys semantic correspondence.
        (c) Correction along the anisotropic residual subspace reduces dominant residual directions while better preserving source-side semantics.
        }
    \label{fig3}
    \vspace{-10pt}
\end{figure}

To answer this question, we compare five diagnostic transformations: 
\ding{182} Identity Mapping \(T_{\mathrm{id}}\): the unaligned state; 
\ding{183} Centroid Correction \(T_\mu\): only removes the global centroid shift; 
\ding{184} Moment Correction \(T_\Sigma\): matches global moment statistics; 
\ding{185} Random Target Replacement \(T_{\mathrm{perm}}\): serves as a negative control that matches the target distribution but destroys semantic correspondence; and 
\ding{186} \(T_\alpha\): provides a controlled interpolation between semantic preservation and target-distribution compatibility by correcting representations along dominant residual directions. The experimental results show that different transformations exhibit clearly different alignment behaviors. As shown in Fig.~\ref{fig3}(a), \(T_\mu\) preserves source-side semantics well, but provides limited improvement in target-side local mixing; \(T_{\mathrm{perm}}\), although drawn from the target distribution, almost completely destroys source semantics, indicating that matching the target distribution alone is insufficient. Fig.~\ref{fig3}(b) further shows that \(T_\Sigma\) reduces global statistical discrepancy, but introduces noticeable source-side semantic degradation. In contrast, \(T_\alpha\) forms a continuous trade-off between source-side semantic preservation and target-side geometric compatibility. Finally, Fig.~\ref{fig3}(c) shows that correcting along dominant anisotropic residual directions more directly suppresses the dominant residual components. Therefore, effective alignment should not be viewed as minimizing a single global gap; instead, it should both preserve the semantic geometry of the source modality and correct the dominant anisotropic residuals that prevent compatibility with the target distribution. We provide theoretical support for the geometric diagnostics and the anisotropic alignment principle in Appendix.~\ref{app:1}. The above diagnostics naturally lead to the following principle:

\begin{tcolorbox}[boxrule=1pt, colframe=black!70, colback=gray!5, left=8pt, right=8pt, top=4pt, bottom=4pt]
\noindent\textbf{Principle (Anisotropic Modality Alignment).}
Effective modality alignment should preserve the source modality's semantic geometry while correcting the dominant anisotropic residual directions that prevent compatibility with the target-modality distribution.
\end{tcolorbox}


\section{\method{}}

\subsection{Fixed-Frame Subspace Decomposition}


Following Sec.~\ref{subsec:3.1}, we first fix a shared dominant subspace to provide a stable geometric frame for alignment, and identify a shared dominant subspace capturing the major geometric structure of both modalities. Let $\mu_t, \mu_i \in \mathbb{R}^d$ denote the empirical means of text embeddings and image embeddings, respectively, and let $\Sigma_t, \Sigma_i \in \mathbb{R}^{d\times d}$ denote the corresponding centered covariance matrices. We define the joint structure matrix as $\Sigma = \Sigma_t + \Sigma_i + \lambda I$, where $\lambda > 0$ is a regularization parameter and $I$ is the identity matrix. Let $Q_U \in \mathbb{R}^{d\times r}$ consist of the top-$r$ eigenvectors of $\Sigma$. Then, $\mathbb{R}^d$ can be decomposed into two mutually orthogonal subspaces: $\mathbb{R}^d = U \oplus V$, with $U = \mathrm{span}(Q_U)$. Under this decomposition, any embedding $z \in \mathbb{R}^d$ can be uniquely written as:
\begin{equation}
z_U = Q_U Q_U^\top z,\qquad z_V = z - z_U.
\end{equation}
Here, $z_U$ denotes the orthogonal projection of $z$ onto the subspace $U$, capturing its component along the first $r$ dominant statistical directions; $z_V$ denotes the remaining component orthogonal to $U$. All subsequent alignment operations are performed under this fixed decomposition.


\subsection{Anisotropic Circular Decoupling}

\begin{wrapfigure}{r}{0.4\textwidth}
  \centering
  \vspace{-40pt}
  \resizebox{\linewidth}{!}{%
    \begin{tikzpicture}[>=Stealth, scale=1.6]
        
        \colorlet{colDist}{teal!85!black}     
        \colorlet{colVec}{blue!75!black}      
        \colorlet{colProj}{orange!90!black}   
        \colorlet{colAxis}{gray!60}           
        \colorlet{colMix}{magenta!80!black}   
        
        \def\rot{22} 
        
        \begin{scope}[rotate=\rot]
            \draw[colDist!40, fill=colDist!4, line width=1pt] (0,0) ellipse (2.2cm and 1.1cm);
            \draw[colDist!70, fill=colDist!12, line width=1pt] (0,0) ellipse (1.4cm and 0.7cm);
            
            \draw[dashed, colDist!80, line width=0.8pt] (-0.4, 0) -- (2.4, 0) node[above right, font=\scriptsize, inner sep=1pt] {$Q_U$};
            \draw[dashed, colDist!80, line width=0.8pt] (0, -0.4) -- (0, 1.3);
        \end{scope}
        
        \node[colDist, font=\scriptsize, anchor=north west] at (1.1, -0.3) {Anisotropic Subspace $U$};
        
        \draw[->, colAxis, line width=1pt] (-0.5, 0) -- (2.7, 0) node[right, text=black] {$a_k$};
        \draw[->, colAxis, line width=1pt] (0, -0.5) -- (0, 1.9) node[above, text=black] {$b_k$};
        \filldraw[gray!50] (0,0) circle (1pt); 
        
        \draw[->, colMix, line width=1pt] (\rot:2.25) arc (\rot:0:2.25);
        \node[colMix, font=\scriptsize, right, inner sep=2pt] at ({\rot/2}:2.27) {$Q_U R$};
        
        \def\r{1.9}
        \def\ang{46}
        \coordinate (P) at (\ang:\r);
        
        \draw[dashed, colProj, line width=0.8pt] (P) -- ({\r*cos(\ang)}, 0) node[below] {$a_k$};
        \draw[dashed, colProj, line width=0.8pt] (P) -- (0, {\r*sin(\ang)}) node[left] {$b_k$};

        \draw[dashed, colVec!50, line width=1pt] (0:\r) arc (0:82:\r);
        \node[colVec!80, font=\scriptsize, rotate=-20] at (76:\r+0.15) {Constant $\rho_k$};
        
        \draw[->, colVec, line width=1.2pt] (0,0) -- (P);
        \filldraw[colVec] (P) circle (1.5pt) node[above right, font=\small, inner sep=2pt] {$c_k = (a_k, b_k)$};
        
        \path (0,0) -- (P) node[midway, sloped, above, colVec, font=\small, inner sep=2pt] {$\rho_k$};
        \draw[->, colVec, line width=1pt] (0:0.6) arc (0:\ang:0.6);
        \node[colVec, font=\small] at ({\ang/2}:0.8) {$\theta_k$};
        
    \end{tikzpicture}%
  }
  \vspace{-10pt}
  \caption{Anisotropic circular decoupling in $U$ subspace.}
  \label{fig:ecoupling}
  \vspace{-10pt}
\end{wrapfigure}


Following Sec.~\ref{subsec:3.2}, we then use blockwise polar coordinates to explicitly model the anisotropic residual structure. We introduce an explicit blockwise polar parameterization protocol within the dominant subspace $U$. As shown in Fig.~\ref{fig:ecoupling}. We first map the projection $Q_U^\top z \in \mathbb{R}^r$ into $m = r/2$ discrete two-dimensional subspaces. However, natively constructing these subspaces directly based on the principal component hierarchy introduces an arbitrary dependence on specific eigenvector orderings, making the decomposition sensitive to arbitrary eigenvector orderings. To inoculate the architecture against this basis dependence, we introduce a continuous orthogonal mixing matrix $R \in \mathbb{R}^{r \times r}$, subject to the strict constraint $R^\top R = I$. We dynamically redefine the internal coordinate basis as $Q_U \leftarrow Q_U R$. This mixing operation preserves the invariant span of the subspace $U$ while autonomously discovering a maximally stable internal coordinate organization for downstream anisotropic decoupling. Based on this optimized coordinate system, let $(a_k, b_k)$ denote the coordinates of the projected vector $c = Q_U^\top z \in \mathbb{R}^r$ within the $k$-th two-dimensional block. We reformulate these Euclidean coordinates into a polar embedding:
\begin{equation}
    \rho_k = \sqrt{a_k^2 + b_k^2 + \varepsilon}, \quad \theta_k = \text{atan2}(b_k, a_k)
\end{equation}
where $\varepsilon>0$ ensures numerical stability near the origin. The embedding in $U$ is thus decoupled into blockwise radii $\rho=(\rho_1,\dots,\rho_m)$ and phases $\theta=(\theta_1,\dots,\theta_m)$.

\subsection{Stage I: Target-Modality Periodic Prior Pretraining}

 \begin{wrapfigure}{r}{0.4\textwidth}
  \centering
  \vspace{-10pt}
  \resizebox{\linewidth}{!}{%
    \begin{tikzpicture}[>=Stealth, scale=1.8] 
        
        \colorlet{colTarget}{teal!90!black}     
        \colorlet{colPhase}{blue!75!black}      
        \colorlet{colMu}{orange!90!black}        
        \colorlet{colDist}{red!75!orange}       
        \colorlet{colNeighbor}{purple!75!black} 
        \colorlet{colArc}{gray!40}              

        \draw[dashed, gray!30, thin] (0,0) -- (0:1.5);
        \draw[dashed, gray!30, thin] (0,0) -- (90:1.5);
        \draw[dashed, gray!30, thin] (0,0) -- (45:1.5);
        \draw[dashed, gray!30, thin] (0,0) -- (135:1.5); 
        \filldraw[gray!50] (0,0) circle (0.8pt); 

        \draw[line width=1.2pt, colArc] (-15:1.5) arc (-15:150:1.5);
        
        \draw[line width=1.2pt, colDist, fill=colDist!15, fill opacity=0.8, domain=15:75, samples=80] 
            plot ({1.5*cos(\x)*(1 + 0.25*exp(-((\x-45)*(\x-45))/50))}, 
                  {1.5*sin(\x)*(1 + 0.25*exp(-((\x-45)*(\x-45))/50))}) 
            -- (75:1.5) arc (75:15:1.5) -- cycle; 

        \filldraw[colTarget] (0:1.5) circle (1.5pt);
        \node[colTarget, font=\small, right=3pt] at (0:1.5) {$\bar{\psi}_k, \alpha_k$};

        \filldraw[colPhase] (90:1.5) circle (1.5pt);
        \node[colPhase, font=\small, right=2pt, yshift=-2pt] at (90:1.5) {$\phi_k$}; 

        \filldraw[colNeighbor] (135:1.5) circle (1.5pt);
        \node[colNeighbor, font=\small, left=2pt, yshift=-2pt] at (135:1.5) {$\phi_\ell$};

        \draw[<->, dashed, colNeighbor, line width=0.8pt] (132:1.62) arc (132:93:1.62);
        \node[colNeighbor, font=\footnotesize, above left=1pt] at (112.5:1.62) {$A_{k\ell}, \eta_{k\ell}$};

        \filldraw[colMu] (45:1.5) circle (1.5pt);
        \node[colMu, font=\small, below left=1pt] at (45:1.45) {$\mu_\phi$};

        \draw[->, line width=1pt, colPhase] (85:1.68) arc (85:55:1.68);
        \node[colPhase, font=\footnotesize, inner sep=1pt] at (70:1.85) {$-\tau \nabla_\phi \Psi$};

        \node[colDist, font=\footnotesize, anchor=west] at (1.4, 1.) {$\tilde{\phi}, \sigma_t$};

    \end{tikzpicture}%
  }

  \caption{Target-modality periodic prior in phase space. Image phases define marginal anchors \((\bar{\psi}_k,\alpha_k)\) and pairwise couplings \((A_{k\ell},\eta_{k\ell})\), which induce a drift field \(-\tau\nabla_\phi\Psi\) and train the frozen phase score prior \(s_\phi\).}
  \label{fig:periodic_prior}
  \vspace{-8pt}
\end{wrapfigure}

Before learning any modality alignment, we first estimate the phase statistical structure of the target modality in the decoupled phase space using only the image. As shown in Fig.~\ref{fig:periodic_prior}. This structure consists of two aspects: first, the marginal distributions of the phase variables of individual two-dimensional blocks; second, the dependency relations among phase differences across different two-dimensional blocks. Stage I does not involve learning a text-to-image mapping. Instead, it constructs a frozen periodic score prior $s_\phi$ from the image modality, which is subsequently used in Stage II as a target-modal constraint.

For an image embedding $x$, let $\{(\rho_k^{(x)}, \theta_k^{(x)})\}_{k=1}^m$ denote its polar embedding. We define the blockwise circular correlation statistic as:
\begin{equation}
M_{k\ell}^{(x)}=\mathbb{E}\!\left[e^{\,i(\theta_k^{(x)}-\theta_\ell^{(x)})}\right]\in\mathbb{C}. 
\end{equation}
Here, $|M_{k\ell}^{(x)}|$ measures the consistency of the phase difference between the $k$-th and $\ell$-th blocks, while $\arg(M_{k\ell}^{(x)})$ gives the corresponding empirical phase offset. Instead of selecting globally top-$p$ block pairs over all possible pairs, we construct the sparse dependency graph in a block-adaptive manner: for each block $k$, we retain the top-$p$ blocks $\ell\neq k$ with the largest $|M_{k\ell}^{(x)}|$, and then take the union of all retained undirected pairs. This yields a sparse dependency graph $E\subseteq [m]\times[m]$, where $[m]:=\{1,\ldots,m\}$.

Based on these quantities, we define a drift field in phase space, $\nabla_\phi\Psi(\phi)\in\mathbb{R}^m$, where $\phi=(\phi_1,\ldots,\phi_m)\in[-\pi,\pi)^m$. Its $k$-th component is
\begin{equation}
[\nabla_\phi \Psi(\phi)]_k
=
\alpha_k \sin(\phi_k-\bar{\psi}_k)
+
\sum_{\ell:(k,\ell)\in E}
A_{k\ell}\sin(\phi_k-\phi_\ell-\eta_{k\ell}). 
\end{equation}
Here, $A_{k\ell}=|M_{k\ell}^{(x)}|\in\mathbb{R}_{\ge 0}$ and $\eta_{k\ell}=\arg(M_{k\ell}^{(x)})\in[-\pi,\pi)$ denote the coupling strength and empirical phase offset of edge $(k,\ell)$, respectively; $\bar{\psi}_k=\arg(\mathbb{E}[e^{\,i\theta_k^{(x)}}])\in[-\pi,\pi)$ denotes the dominant phase location of the $k$-th two-dimensional block; and $\alpha_k=\mathbb{E}[(\rho_k^{(x)})^2]/(\sum_{u=1}^m \mathbb{E}[(\rho_u^{(x)})^2]+\varepsilon)\in\mathbb{R}_{\ge 0}$ denotes the relative weight of that block.

Given a phase vector $\phi$, we first define the drifted phase center $\mu_\phi\in[-\pi,\pi)^m$ as:
\begin{equation}
\mu_\phi=\operatorname{wrap}\!\big(\phi-\tau\nabla_\phi\Psi(\phi)\big). 
\end{equation}
We then construct a perturbed phase sample $\tilde{\phi}\in[-\pi,\pi)^m$ as $\tilde{\phi}=\operatorname{wrap}(\mu_\phi+\sqrt{2}\sigma_t\epsilon)$, where $\epsilon\sim\mathcal{N}(0,I_m)$, $\tau>0$ is the drift step size, and $\sigma_t>0$ is the noise scale at time step $t$.

On this basis, we train a phase-aware score network $s_\phi:\mathbb{R}^m\times\mathbb{R}\times\mathbb{R}^m\to\mathbb{R}^m$, whose input is $(\tilde{\phi},t,\log\rho)$ and whose output is the phase score $s_\phi(\tilde{\phi},t,\log\rho)\in\mathbb{R}^m$. The Stage-I loss is defined as:
\begin{equation}
\mathcal{L}^{\mathrm{I}}
=
\mathbb{E}_{t,\tilde{\phi}}
\left[
\lambda_t
\left\|
s_\phi(\tilde{\phi},t,\log\rho)
-
\nabla_{\tilde{\phi}}\log q(\tilde{\phi}\mid \mu_\phi,\sigma_t)
\right\|_2^2
\right], 
\end{equation}
where $q(\tilde{\phi}\mid \mu_\phi,\sigma_t)$ denotes a wrapped Gaussian distribution centered at $\mu_\phi$ with noise scale $\sigma_t$, $\nabla_{\tilde{\phi}}\log q(\tilde{\phi}\mid \mu_\phi,\sigma_t)\in\mathbb{R}^m$ is its score with respect to $\tilde{\phi}$, and $\lambda_t=2\sigma_t^2$.


Therefore, Stage I yields a phase score prior determined by the target image distribution. This prior is kept frozen after training and is introduced in Stage II as a target-modal constraint.

\subsection{Stage II: Prior-Guided Bounded Alignment}

After fixing the periodic prior $s_\phi$ of the target modality, Stage II performs a two-stage update on the text embedding $y\in\mathbb{R}^d$: a deterministic global initialization followed by an instance-conditioned bounded refinement.

\subsubsection{Global Initialization}

We first recenter the text embedding by $\bar y = y-\mu_t+\mu_i \in \mathbb{R}^d$. \textbf{On $U$-side}. We project $\bar y$ onto the mixed basis and express it in blockwise polar coordinates $(\rho,\theta)$. We set $\theta^{(0)}=\theta\in[-\pi,\pi)^m$ and define $\rho_k^{(0)}=T_k(\rho_k)$, where $T_k(r)=\big(F_k^{(x)}\big)^{-1}\!\big(F_k^{(y)}(r)\big)$. Here, $F_k^{(x)}$ and $F_k^{(y)}$ denote the empirical radial cumulative distribution functions of images and text, respectively, on the $k$-th two-dimensional block. This gives $\rho^{(0)}=(\rho_1^{(0)},\ldots,\rho_m^{(0)})\in\mathbb{R}_{>0}^m$. \textbf{On $V$-side}. We define $y_U=Q_UQ_U^\top y$ and $y_V=y-y_U$, and set $v^{(0)}=\mu_{i,V}+D_V\bigl(y_V-\mu_{t,V}\bigr)\in\mathbb{R}^d$, where $D_V=\mathrm{Diag}\!\bigl(\sigma_V^{(x)}/(\sigma_V^{(y)}+\varepsilon)\bigr)$, $\mu_{i,V}=P_V\mu_i$, and $\mu_{t,V}=P_V\mu_t$. This yields the initialized state $(\theta^{(0)},\rho^{(0)},v^{(0)})$.

\subsubsection{Prior-Guided Residual Refinement}

Starting from the initialized state, we use an instance-conditioned map $g_\eta$ to predict residual corrections for phase, radius, and the $V$-subspace component:
\begin{equation}
(\Delta\theta,\Delta\rho,\Delta v)
=
g_\eta\!\big([\sin\theta^{(0)};\cos\theta^{(0)};\log\rho^{(0)};v^{(0)}]\big),
\end{equation}
where $\Delta\theta,\Delta\rho\in\mathbb{R}^m$ and $\Delta v\in\mathbb{R}^d$. Since the refinement of the residual component is restricted to the orthogonal complement $V$, we remove its $U$-projection and keep only the $V$-part, i.e., $\Delta v_V=\Delta v-Q_UQ_U^\top\Delta v$. Rather than directly denoising toward the target modality, we constrain the refined phase configuration to remain locally compatible with the target prior. The refined phase, radius, and residual component are then given by $\hat{\theta}=\operatorname{wrap}\!\big(\theta^{(0)}+\alpha_\theta\tanh(\Delta\theta)\big)$, $\hat{\rho}_k=\rho_k^{(0)}\exp\!\big(\alpha_\rho\tanh(\Delta\rho_k)\big)$, and $\hat v=v^{(0)}+\alpha_v\tanh(\Delta v_V)$. 
so that $\hat{\theta}\in[-\pi,\pi)^m$, $\hat{\rho}=(\hat{\rho}_1,\ldots,\hat{\rho}_m)\in\mathbb{R}_{>0}^m$, and $\hat v\in\mathbb{R}^d$.


To impose the target-modality prior, instead of using a one-step denoising guidance objective, we construct a prior-matching loss around the refined phase itself. Specifically, we first define $\mu_{\hat\theta}=\operatorname{wrap}\!\big(\hat\theta-\tau\nabla_\phi\Psi(\hat\theta)\big)$ and then perturb it as $\tilde{\theta}=\operatorname{wrap}\!\big(\mu_{\hat\theta}+\sqrt{2}\sigma_t\epsilon\big)$, where $\epsilon\sim\mathcal{N}(0,I_m)$. We define the prior-matching loss as
\begin{equation}
\mathcal{L}^{\mathrm{II}}
=
\mathbb{E}_{t,\epsilon}
\left[
\lambda_t
\left\|
s_\phi(\tilde{\theta},t,\log\hat{\rho})
-
\nabla_{\tilde{\theta}}\log q(\tilde{\theta}\mid \mu_{\hat\theta},\sigma_t)
\right\|_2^2
\right].
\end{equation}
This objective encourages the refined phase configuration to remain locally compatible with the frozen target-modality periodic prior.

In parallel, reusing the sparse graph $E$ from Stage I, we define $\omega_{k\ell}^{(0)}=\rho_k^{(0)}\rho_\ell^{(0)}\big/\bigl(\sum_{(u,v)\in E}\rho_u^{(0)}\rho_v^{(0)}+\varepsilon\bigr)$ for any $(k,\ell)\in E$, and impose the relative phase deformation constraint
\begin{equation}
\mathcal{L}^{\Phi}=\frac{1}{|E|}\sum_{(k,\ell)\in E}\omega_{k\ell}^{(0)}\left[1-\cos\!\Big((\hat{\theta}_k-\hat{\theta}_\ell)-(\theta_k^{(0)}-\theta_\ell^{(0)})\Big)
\right].
\end{equation}
Finally, let $c(\hat{\rho},\hat{\theta})\in\mathbb{R}^r$ denote the blockwise Cartesian vector generated from $(\hat{\rho},\hat{\theta})$, whose $k$-th two-dimensional block is $(\hat{\rho}_k\cos\hat{\theta}_k,\hat{\rho}_k\sin\hat{\theta}_k)$. We first reconstruct an intermediate normalized embedding as $e'=\operatorname{Norm}\!\big(Q_U\,c(\hat{\rho},\hat{\theta})+\hat v\big)\in\mathbb{S}^{d-1}$. After transforming the full text corpus, we further estimate the global mean of the intermediate transformed representations, $\hat\mu=\mathbb{E}_y[e'(y)]$, and perform a final global centroid calibration by defining $e=\operatorname{Norm}\!\big(e'-\hat\mu+\mu_i\big)\in\mathbb{S}^{d-1}$. The calibrated representation $e$ is used as the final substitute representation in the target modality.

\section{Experiments}

In this section, we systematically evaluate the effectiveness of our method from two perspectives: representation-level geometric diagnostics and MLLM training. The experiments are designed to answer six core questions, Q1--Q6. At the representation level, we use images as the target modality and texts as the source modality. We randomly sample 10K paired image-text representation samples for geometric diagnostics. At the MLLM level, we keep the model architecture, decoding settings, training data, and evaluation protocol unchanged. We use LLM2CLIP-Openai-L-14-336 \cite{huang2026llm2clippowerfullanguagemodel} as the encoder and Llama-3-8B-Instruct as the LLM backbone. We compare four methods: Text, C3 \cite{zhang2024connect}, ReAlign \cite{yu2026modalitygapdrivensubspacealignment}, and \method{}. Detailed experiment settings are provided in Appendix.~\ref{app:2}. 


\paragraph{\ding{182} \textbf{\method{} Better Match the Target-Modality Geometry?}}

\begin{figure}[t]
    \centering
    \includegraphics[width=1.\linewidth]{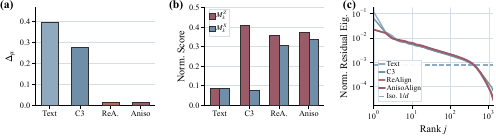}
    \vspace{-15pt}
    \caption{
        Target-geometry compatibility of different alignment methods. AnisoAlign achieves near-zero centroid discrepancy, the most balanced local support matching, and a low-anisotropy residual spectrum, outperforming Text and C3 while remaining competitive with ReAlign.
        }
    \vspace{-10pt}
    \label{fig6}
\end{figure}

This experiment examines whether the transformed source representations \(Z=T(Y)\) enter the geometric support of the target modality. We first measure the centroid discrepancy \(\Delta_\mu(T)=\|\mu_z-\mu_x\|_2\). As shown in Fig.~\ref{fig6}(a), Text shows a global offset with \(\Delta_\mu=0.393\), and C3 reduces it to \(0.276\). In contrast, ReAlign and \method{} both reduce it to about \(0.012\), indicating effective target-centroid calibration. We evaluate local support compatibility. As shown in Fig.~\ref{fig6}(b), C3 obtains \(M_k^Z=0.410\) but only \(M_k^X=0.075\), suggesting sparse target penetration without sufficient target coverage. ReAlign gives more balanced scores, \(M_k^Z=0.357\) and \(M_k^X=0.305\), while \method{} improves them to \(M_k^Z=0.372\) and \(M_k^X=0.337\), achieving the best balance between penetration and coverage. The residual spectra in Fig.~\ref{fig6}(c) also show that Text and C3 retain clear anisotropic residual structures, whereas ReAlign and \method{} reduce dominant residual directions. \method{} achieves near-zero centroid discrepancy, the most balanced local support matching, and a much weaker structured anisotropic residual.

\paragraph{\ding{183} \textbf{Does \method{} Preserve Source-Modality Semantics During Modality Alignment?}} 

\begin{wrapfigure}{r}{0.5\textwidth}
    \centering
    \vspace{-10pt}
    \includegraphics[width=\linewidth]{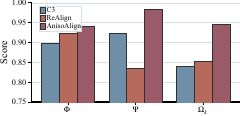}
    \caption{
    Source-modality semantic preservation of different alignment methods. AnisoAlign achieves the best performance across instance consistency, relative geometry consistency, and neighborhood consistency.
    }
    \label{fig7}
\end{wrapfigure}


This experiment evaluates whether modality alignment can preserve the semantic organization of the source modality while performing geometric correction. As shown in Fig.~\ref{fig7}, C3 achieves approximately \(0.899\), \(0.925\), and \(0.840\) on \(\Phi\), \(\Psi\), and \(\Omega_k\), respectively, indicating that Gaussian perturbation can preserve certain global pairwise similarity relations, but introduces noticeable disruption to local neighborhood structures. ReAlign performs well in instance-level consistency, with \(\Phi\approx0.923\), but its relative geometry consistency drops to \(\Psi\approx0.836\), suggesting that pointwise closeness alone does not guarantee the stability of semantic relations within the source modality. In contrast, AnisoAlign achieves the best performance on all three metrics, with \(\Phi\approx0.941\), \(\Psi\approx0.983\), and \(\Omega_k\approx0.945\). This shows that AnisoAlign not only preserves instance-level consistency between transformed representations and original text representations, but also more stably maintains the global semantic relations and local neighborhood structure of the source modality.

\paragraph{\ding{184} \textbf{Can \method{} Improve Fully Text-Only MLLM Training?}}

We next ask whether \method{} can provide an effective visual representation interface for MLLMs without using any image-text pairs throughout training. In this setting, the model cannot learn from real image features and must rely only on substitute visual representations obtained from aligned text representations. All methods use the same protocol: pretraining on Unicorn-1.2M \cite{yu2025unicorntextonlydatasynthesis} followed by instruction-tuning on Unicorn-Instruction-417K \cite{yu2025unicorntextonlydatasynthesis}, with identical data, architecture, and training procedure. As shown in Table~\ref{table1}, \method{} achieves the highest average score, \(47.49\), outperforming ReAlign (\(45.00\)), C3 Align (\(42.44\)), Unicorn (\(42.57\)), and W/o. Align (\(40.08\)). This shows that fully text-only training depends not only on the amount of text data, but also on whether text representations can enter the visual representation space in the correct geometric form. W/o. Align leaves substitute representations near the text distribution; C3 Align and ReAlign alleviate this issue through statistical correction or global distribution matching. In contrast, \method{} jointly models target-modality distribution constraints and source-modality semantic preservation, producing substitute representations better suited as a visual interface.




\definecolor{TableTop}{HTML}{1A365D}   
\definecolor{TableMid}{HTML}{2B6CB0}   
\definecolor{RowAlt}{HTML}{F7FAFC}     
\definecolor{BestRow}{HTML}{FFFBF0}    
\definecolor{BestText}{HTML}{C05621}   

\begin{table*}[t]
\centering
\caption{Results on fully text-only MLLM training setting.}
\label{tab1}
\renewcommand{\arraystretch}{1.3} 
\arrayrulecolor{TableTop} 

\resizebox{\linewidth}{!}{%
\begin{tabular}{l cccc cccc ccc c}
\toprule[1.5pt] 
\multirow{2}{*}{\textcolor{TableTop}{\textbf{Method}}} 
& \multicolumn{4}{c}{\textcolor{TableTop}{\textbf{General}}} 
& \multicolumn{4}{c}{\textcolor{TableTop}{\textbf{Reasoning}}} 
& \multicolumn{3}{c}{\textcolor{TableTop}{\textbf{Hallucination}}} 
& \multirow{2}{*}{\textcolor{TableTop}{\textbf{Avg. $\uparrow$}}} \\
\cmidrule(lr){2-5} \cmidrule(lr){6-9} \cmidrule(lr){10-12}
& \textcolor{TableMid}{\textbf{MME}} & \textcolor{TableMid}{\textbf{MMStar}} & \textcolor{TableMid}{\textbf{SQA}} & \textcolor{TableMid}{\textbf{RWQA}} 
& \textcolor{TableMid}{\textbf{MMMU}} & \textcolor{TableMid}{\textbf{MMMU-P}} & \textcolor{TableMid}{\textbf{VisuLogic}} & \textcolor{TableMid}{\textbf{LogicVista}} 
& \textcolor{TableMid}{\textbf{CRPE}} & \textcolor{TableMid}{\textbf{POPE}} & \textcolor{TableMid}{\textbf{HallBench}} & \\
\midrule[1pt]

Blind       &  3.37 &  8.80 &  6.17 &  5.36 & 19.60 & 12.44 &  0.30 &  1.56 & 12.90 &  0.60 & 15.25 &  7.85 \\

\hdashline

\rowcolor{RowAlt} 
W/o. Align & 46.17 & 30.67 & 58.51 & 37.78 & 30.69 & 29.59 & 25.60 & 24.38 & 65.23 & 55.28 & 37.01 & 40.08  \\
Unicorn & 60.24 & 29.27 & 66.12 & 37.65 & 30.46 & 30.73 & 25.50 & 24.16 & 65.76 & 55.31 & 43.01 & 42.57 \\
C$^3$ Align & 62.56 & 31.40 & 63.30 & 36.47 & 32.67 & 30.34 & 26.00 & 23.27 & 59.07 & 54.17 & 47.63 & 42.44 \\
ReAlign & 67.48 & 32.80 & 65.68 & 40.78 & 33.61 & 31.85 & 26.20 & 25.95 & \textcolor{BestText}{\textbf{67.66}} & 56.91 & 46.06 & 45.00 \\

\hdashline

\rowcolor{BestRow} 
\textbf{\method{}} & 
\textcolor{BestText}{\textbf{72.96}} & 
\textcolor{BestText}{\textbf{34.47}} & \textcolor{BestText}{\textbf{70.81}} & \textcolor{BestText}{\textbf{42.09}} & \textcolor{BestText}{\textbf{37.34}} & \textcolor{BestText}{\textbf{34.05}} & \textcolor{BestText}{\textbf{27.90}} & \textcolor{BestText}{\textbf{27.29}} & 66.36 & \textcolor{BestText}{\textbf{57.62}} & \textcolor{BestText}{\textbf{51.52}} & \textcolor{BestText}{\textbf{47.49}} \\
\bottomrule[1.5pt] 
\end{tabular}%
}
\label{table1}
\vspace{5pt}
\arrayrulecolor{black}
\end{table*}

\paragraph{\ding{185} \textbf{Can \method{} serve as a stronger text-only pretraining interface before visual instruction tuning?}}

\definecolor{TableTop}{HTML}{1A365D}   
\definecolor{TableMid}{HTML}{2B6CB0}   
\definecolor{RowAlt}{HTML}{F7FAFC}     
\definecolor{BestRow}{HTML}{FFFBF0}    
\definecolor{BestText}{HTML}{C05621}   

\begin{table*}[t]
\centering
\vspace{-8pt}
\caption{Results on text-only pretraining setting.}
\label{tab1}
\renewcommand{\arraystretch}{1.3} 
\arrayrulecolor{TableTop} 

\resizebox{\linewidth}{!}{%
\begin{tabular}{l cccc cccc ccc c}
\toprule[1.5pt] 
\multirow{2}{*}{\textcolor{TableTop}{\textbf{Method}}} 
& \multicolumn{4}{c}{\textcolor{TableTop}{\textbf{General}}} 
& \multicolumn{4}{c}{\textcolor{TableTop}{\textbf{Reasoning}}} 
& \multicolumn{3}{c}{\textcolor{TableTop}{\textbf{Hallucination}}} 
& \multirow{2}{*}{\textcolor{TableTop}{\textbf{Avg. $\uparrow$}}} \\
\cmidrule(lr){2-5} \cmidrule(lr){6-9} \cmidrule(lr){10-12}
& \textcolor{TableMid}{\textbf{MME}} & \textcolor{TableMid}{\textbf{MMStar}} & \textcolor{TableMid}{\textbf{SQA}} & \textcolor{TableMid}{\textbf{RWQA}} 
& \textcolor{TableMid}{\textbf{MMMU}} & \textcolor{TableMid}{\textbf{MMMU-P}} & \textcolor{TableMid}{\textbf{VisuLogic}} & \textcolor{TableMid}{\textbf{LogicVista}} 
& \textcolor{TableMid}{\textbf{CRPE}} & \textcolor{TableMid}{\textbf{POPE}} & \textcolor{TableMid}{\textbf{HallBench}} & \\
\midrule[1pt]

Blind       &  3.37 &  8.80 &  6.17 &  5.36 & 19.60 & 12.44 &  0.30 &  1.56 & 12.90 &  0.60 & 15.25 &  7.85 \\

\hdashline

\rowcolor{RowAlt} 
W/o. Align  & 73.63 & 35.73 & 75.23 & 43.53 & 28.82 & 25.38 & 24.40 & 21.03 & 80.82 & 71.59 & 42.38 & 47.50 \\
C$^3$ Align & 76.16 & 34.60 & 75.52 & 43.14 & 30.69 & 27.20 & 25.50 & 19.91 & 79.99 & 72.43 & 43.53 & 48.06 \\
ReAlign & 79.65 & 36.13 & \textcolor{BestText}{\textbf{76.71}} & \textcolor{BestText}{\textbf{47.97}} & 31.51 & 28.39 & 27.70 & 22.82 & 81.78 & 72.53 & 46.58 & 50.16 \\

\hdashline

\rowcolor{BestRow} 
\textbf{\method{}} & \textcolor{BestText}{\textbf{81.22}} & \textcolor{BestText}{\textbf{36.73}} & 76.27 & 44.58 & \textcolor{BestText}{\textbf{37.34}} & \textcolor{BestText}{\textbf{32.85}} & \textcolor{BestText}{\textbf{28.10}} & \textcolor{BestText}{\textbf{25.95}} & \textcolor{BestText}{\textbf{82.93}} & \textcolor{BestText}{\textbf{73.65}} & \textcolor{BestText}{\textbf{47.84}} & \textcolor{BestText}{\textbf{51.59}} \\
\bottomrule[1.5pt] 
\end{tabular}%
}
\vspace{-5pt}
\label{table2}
\arrayrulecolor{black}
\end{table*}

We examine whether \method{} can serve as a stronger text-only pretraining interface before visual instruction tuning. This setting asks whether large-scale text-only data can first be used to construct substitute visual representations during pretraining, followed by post-training with real vision-language instruction data. We use 1M text samples from Bunny-pretrain \cite{he2024efficientmultimodallearningdatacentric} for text-only pretraining and InternVL-Chat-V1.2-SFT \cite{chen2024internvlscalingvisionfoundation} for visual instruction tuning. As shown in Table~\ref{table2}, \method{} achieves the highest average score of 51.59, outperforming ReAlign (50.16), C3 Align (48.06), and W/o. Align (47.50). These results show that \method{} not only provides substitute visual representations in fully text-only training, but also serves as a better pretraining interface before visual instruction tuning. Compared with ReAlign, \method{} improves by 1.43 points, suggesting that global distribution matching alone is insufficient to fully exploit text-only pretraining signals. Its gains over C3 Align and W/o. Align, 3.53 and 4.09 points respectively, further indicate that coarse perturbation or no explicit alignment cannot construct a stable visual substitute interface.

\definecolor{TableTop}{HTML}{1A365D}   
\definecolor{TableMid}{HTML}{2B6CB0}   
\definecolor{RowAlt}{HTML}{F7FAFC}     
\definecolor{BestRow}{HTML}{FFFBF0}    
\definecolor{BestText}{HTML}{C05621}   

\begin{table*}[t]
\centering
\caption{Scaling text-only data with AnisoAlign enables substitute visual representations to approach and slightly surpass real image-based pretraining.}
\label{tab1}

\renewcommand{\arraystretch}{1.3} 
\arrayrulecolor{TableTop} 

\resizebox{\linewidth}{!}{%
\begin{tabular}{l cccc cccc ccc c}
\toprule[1.5pt] 
\multirow{2}{*}{\textcolor{TableTop}{\textbf{Method}}} 
& \multicolumn{4}{c}{\textcolor{TableTop}{\textbf{General}}} 
& \multicolumn{4}{c}{\textcolor{TableTop}{\textbf{Reasoning}}} 
& \multicolumn{3}{c}{\textcolor{TableTop}{\textbf{Hallucination}}} 
& \multirow{2}{*}{\textcolor{TableTop}{\textbf{Avg. $\uparrow$}}} \\
\cmidrule(lr){2-5} \cmidrule(lr){6-9} \cmidrule(lr){10-12}
& \textcolor{TableMid}{\textbf{MME}} & \textcolor{TableMid}{\textbf{MMStar}} & \textcolor{TableMid}{\textbf{SQA}} & \textcolor{TableMid}{\textbf{RWQA}} 
& \textcolor{TableMid}{\textbf{MMMU}} & \textcolor{TableMid}{\textbf{MMMU-P}} & \textcolor{TableMid}{\textbf{VisuLogic}} & \textcolor{TableMid}{\textbf{LogicVista}} 
& \textcolor{TableMid}{\textbf{CRPE}} & \textcolor{TableMid}{\textbf{POPE}} & \textcolor{TableMid}{\textbf{HallBench}} & \\
\midrule[1pt]

W/. Image   & 82.86 & 37.07 & 77.67 & 45.62 & 38.27 & 33.04 & \textcolor{BestText}{\textbf{29.40}} & 27.08 & 82.73 & 74.16 & 48.06 & 52.72 \\

\rowcolor{RowAlt}
\textbf{\method{}-1M} & 81.22 & 36.73 & 76.27 & 44.58 & 37.34 & 32.85 & 28.10 & 25.95 & \textcolor{BestText}{\textbf{82.93}} & 73.65 & 47.84 & 51.60 \\

\textbf{\method{}-2M} 

& \textcolor{BestText}{\textbf{83.15}} 
& \textcolor{BestText}{\textbf{37.47}} 
& \textcolor{BestText}{\textbf{78.60}} 
& \textcolor{BestText}{\textbf{45.79}} 
& \textcolor{BestText}{\textbf{38.92}} 
& \textcolor{BestText}{\textbf{33.86}} 
& 29.20 
& \textcolor{BestText}{\textbf{27.64}} 
& 82.17 
& \textcolor{BestText}{\textbf{75.39}} 
& \textcolor{BestText}{\textbf{49.63}} 
& \textcolor{BestText}{\textbf{52.75}} \\

\bottomrule[1.5pt] 
\end{tabular}%
}
\label{tab:text_only_scaling}
\arrayrulecolor{black}
\end{table*}

\paragraph{\ding{186} \textbf{Can \method{} surpass paired image-text pretraining by scaling up text-only data?}} We examine a further question: if the scale of text-only data continues to increase, can \method{} approach or even surpass pretraining with real image-text pairs? This experiment is designed to verify the scalability of \method{}. If the quality of substitute visual representations is sufficiently high, then text-only data can not only serve as a supplement to real image data, but may also become a more economical and more scalable pretraining resource in large-scale scenarios. We compare three settings: (1) W/. Image, which uses real images; (2) \method{}-1M, which uses 1M text-only samples; and (3) \method{}-2M, which uses 2M text-only samples.
All methods then follow the same downstream training and evaluation pipeline. Table~\ref{tab:text_only_scaling} shows that \method{}-1M already reaches an average score of 51.60, close to W/. Image at 52.72. When the text-only data scale is increased to 2M, \method{}-2M further improves to 52.75, slightly surpassing W/. Image at 52.72 and improving over \method{}-1M by 1.15 average points.
This indicates that real images are not the only scalable source of visual supervision for MLLM pretraining. \method{} provides a scalable training paradigm: through high-quality anisotropic modality alignment, large-scale text data can be transformed into effective visual-style training signals, and can partially reach or even surpass the performance of real image-text pretraining.

\paragraph{\ding{187} \textbf{Are All Components Necessary for \method{}?}} We conduct ablation studies to analyze the contribution of each component in AnisoAlign. As shown in Table.~\ref{tab:ablation}. Using only global initialization achieves an average score of 43.59, showing that coarse centroid and distribution calibration already provide a reasonable substitute representation, but remain insufficient for high-quality modality alignment. Adding instance-conditioned refinement improves the average score to 44.93, indicating that bounded sample-specific correction is necessary beyond global statistics. Introducing the target-modality prior \(L_G\) further raises the score to 46.56, demonstrating that target-side geometric guidance helps the substitute representations better match the visual distribution. Similarly, adding the relative phase constraint \(L_\Phi\) achieves 46.45, confirming the importance of preserving structured phase relations during refinement. The full AnisoAlign model obtains the best average score of 47.49, outperforming all ablated variants and achieving consistent gains across general perception, reasoning, and hallucination-related benchmarks. These results show that global initialization, bounded refinement, target-prior guidance, and phase-structure preservation are complementary and jointly contribute to more effective anisotropic modality alignment.

\begin{table*}[t]
\centering
\caption{Ablation results show that global initialization, bounded refinement, target-prior guidance, and phase-structure preservation each contribute to the final performance of AnisoAlign.}
\label{tab:ablation}
\renewcommand{\arraystretch}{1.3}
\arrayrulecolor{TableTop}

\resizebox{\linewidth}{!}{%
\begin{tabular}{l cccc cccc ccc c}
\toprule[1.5pt]
\multirow{2}{*}{\textcolor{TableTop}{\textbf{Method}}}
& \multicolumn{4}{c}{\textcolor{TableTop}{\textbf{General}}}
& \multicolumn{4}{c}{\textcolor{TableTop}{\textbf{Reasoning}}}
& \multicolumn{3}{c}{\textcolor{TableTop}{\textbf{Hallucination}}}
& \multirow{2}{*}{\textcolor{TableTop}{\textbf{Avg. $\uparrow$}}} \\
\cmidrule(lr){2-5} \cmidrule(lr){6-9} \cmidrule(lr){10-12}
& \textcolor{TableMid}{\textbf{MME}} 
& \textcolor{TableMid}{\textbf{MMStar}} 
& \textcolor{TableMid}{\textbf{SQA}} 
& \textcolor{TableMid}{\textbf{RWQA}}
& \textcolor{TableMid}{\textbf{MMMU}} 
& \textcolor{TableMid}{\textbf{MMMU-P}} 
& \textcolor{TableMid}{\textbf{VisuLogic}} 
& \textcolor{TableMid}{\textbf{LogicVista}}
& \textcolor{TableMid}{\textbf{CRPE}} 
& \textcolor{TableMid}{\textbf{POPE}} 
& \textcolor{TableMid}{\textbf{HallBench}} 
& \\
\midrule[1pt]

Global Initialization Only 
& 64.32 & 32.10 & 64.85 & 39.66 
& 33.02 & 31.41 & 25.90 & 25.48 
& 62.19 & 53.36 & 47.21 
& 43.59 \\

\rowcolor{RowAlt}
Global Initialization + Refinement 
& 67.85 & 32.63 & 66.94 & 40.57 
& 34.18 & 32.16 & 26.41 & 25.97 
& 62.65 & 55.71 & 49.18 
& 44.93 \\

Global Initialization + Refinement + $\mathcal{L}^{\mathrm{II}}$ 
& 71.24 & 33.86 & 68.47 & 41.91 
& 35.63 & 33.41 & 27.03 & 26.52 
& 64.94 & 56.28 & 52.83 
& 46.56 \\

\rowcolor{RowAlt}
Global Initialization + Refinement + $\mathcal{L}^{\Phi}$ 
& 70.38 & 33.41 & 69.22 & 41.67 
& 35.97 & 32.89 & 27.46 & 26.91 
& 65.32 & 55.97 & 51.74 
& 46.45 \\

\rowcolor{BestRow}
\textbf{\method{}}
& \textcolor{BestText}{\textbf{72.96}} 
& \textcolor{BestText}{\textbf{34.47}} 
& \textcolor{BestText}{\textbf{70.81}} 
& \textcolor{BestText}{\textbf{42.09}}
& \textcolor{BestText}{\textbf{37.34}} 
& \textcolor{BestText}{\textbf{34.05}} 
& \textcolor{BestText}{\textbf{27.90}} 
& \textcolor{BestText}{\textbf{27.29}}
& \textcolor{BestText}{\textbf{66.36}}
& \textcolor{BestText}{\textbf{57.62}} 
& \textcolor{BestText}{\textbf{51.52}} 
& \textcolor{BestText}{\textbf{47.49}} \\

\bottomrule[1.5pt]
\end{tabular}%
}
\label{tab:ablation}
\arrayrulecolor{black}
\end{table*}

\section{Conclusion}

This paper revisits the modality gap from a geometric perspective and shows that it is a structured anisotropic residual built upon compatible semantic geometry. Based on this observation, we propose the principle of anisotropic modality alignment. We further propose an unpaired modality alignment method for generating target-modality substitute representations. Experiments show that our method can help eliminate reliance on paired image-text data. Overall, modality alignment should be better understood as a structured geometric correction.

\newpage

\bibliography{ref}
\bibliographystyle{plainnat}


\newpage
\appendix
\onecolumn

\appendix



\section{Theoretical Derivation of the Anisotropic Modality Gap}
\label{app:1}

\subsection{Overview and Notation}
\label{app:overview-notation}

This appendix provides theoretical support for the geometric diagnostics in Sec.~3 and the methodological design in Sec.~4. The objective is not to prove the global optimality of the proposed alignment algorithm, but to formalize the following four points: first, why the modality gap should be decomposed into a global centroid displacement and a centered residual component; second, why the centered residual should be compared against an isotropic null hypothesis; third, why the dominant residual directions constitute efficient correction targets for reducing the residual gap; and fourth, why effective correction must be constrained in order to simultaneously preserve the semantic geometry of the source modality and improve compatibility with the target-modality distribution.

Let \(X\) denote the target image modality and \(Y\) denote the source text modality. Let \(x\in X\) and \(y\in Y\) be a paired image-text representation in the shared embedding space \(\mathbb{R}^{d}\). Paired samples are used only for the geometric diagnostics in Sec.~3; the alignment method proposed in Sec.~4 does not rely on paired supervision. We denote the modality means by
\[
\mu_x \coloneqq \mathbb{E}[x],
\qquad
\mu_y \coloneqq \mathbb{E}[y],
\]
and define the centered variables as
\[
\bar{x} \coloneqq x-\mu_x,
\qquad
\bar{y} \coloneqq y-\mu_y.
\]
The centered covariance matrices of the two modalities are
\[
\Sigma_x \coloneqq \mathbb{E}[\bar{x}\bar{x}^{\top}],
\qquad
\Sigma_y \coloneqq \mathbb{E}[\bar{y}\bar{y}^{\top}].
\]
The centered cross-modal second-order moments are denoted by
\[
\Sigma_{xy} \coloneqq \mathbb{E}[\bar{x}\bar{y}^{\top}],
\qquad
\Sigma_{yx} \coloneqq \Sigma_{xy}^{\top}.
\]
For any symmetric matrix \(M\), let \(\lambda_j(M)\) denote its \(j\)-th largest eigenvalue in descending order, let \(U_M^q\) denote the matrix formed by its top \(q\) eigenvectors, and let \(P_M^q \coloneqq U_M^q(U_M^q)^{\top}\) denote the corresponding orthogonal projector.

\subsection{Formalizing Dominant Geometric Compatibility}
\label{app:dominant-geometric-compatibility}

Before studying how to correct the modality gap, we first ask whether the two modalities possess compatible global geometry in the shared representation space. If image and text representations were two arbitrary and unrelated distributions, then any geometric correction would be unlikely to simultaneously achieve distributional alignment and semantic preservation.

\subsubsection{Spectral Compatibility}
\label{app:spectral-compatibility}

Let the eigenvalues of \(\Sigma_x\) and \(\Sigma_y\) be sorted in descending order. We measure whether the two modalities allocate variance energy similarly across dominant and tail directions using the log-spectral correlation
\[
C_{\lambda}
\coloneqq
\mathrm{corr}\!\left(
\log \lambda(\Sigma_x),
\log \lambda(\Sigma_y)
\right).
\]
A high value of \(C_{\lambda}\) indicates that the two modalities exhibit similar hierarchical variance-energy profiles. In other words, if one modality allocates substantial variance to certain dominant directions, the other modality tends to allocate substantial variance to directions at similar spectral ranks.

However, spectral similarity alone does not imply that the two modalities use the same directions. Two covariance matrices may have similar eigenvalue spectra while having nearly orthogonal eigenspaces. Therefore, we further compare their principal subspaces.

\subsubsection{Principal Subspace Overlap and the Random Baseline}
\label{app:principal-subspace-overlap}

Let \(U_x^q\) and \(U_y^q\) denote the top-\(q\) eigenvector matrices of \(\Sigma_x\) and \(\Sigma_y\), respectively. We define the principal subspace overlap as
\[
O_q
\coloneqq
\frac{1}{q}
\left\|
(U_x^q)^{\top}U_y^q
\right\|_F^2
=
\frac{1}{q}\mathrm{tr}(P_x^qP_y^q),
\]
where \(P_x^q=U_x^q(U_x^q)^{\top}\) and \(P_y^q=U_y^q(U_y^q)^{\top}\). This quantity lies in \([0,1]\) and measures the degree of overlap between the two \(q\)-dimensional principal subspaces.

\begin{lemma}[Random-subspace baseline]
\label{lem:random-subspace-baseline}
Suppose \(U_y^q\) is sampled uniformly from the Grassmann manifold \(\mathrm{Gr}(q,d)\) with respect to the Haar measure and is independent of \(U_x^q\). Then
\[
\mathbb{E}[P_y^q]=\frac{q}{d}I_d,
\qquad
\mathbb{E}[O_q]=\frac{q}{d}.
\]
\end{lemma}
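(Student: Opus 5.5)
The plan is to compute $\mathbb{E}[P_y^q]$ first, using only the rotation invariance of the Haar measure, and then read off $\mathbb{E}[O_q]$ by linearity of the trace.

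\textbf{Step 1: invariance of $\mathbb{E}[P_y^q]$.} Haar measure on $\mathrm{Gr}(q,d)$ is invariant under the action of $O(d)$. So for any orthogonal $Q$, the random projector $QP_y^qQ^\top$ has the same law as $P_y^q$. Write $M \coloneqq \mathbb{E}[P_y^q]$. This matrix is finite because $\|P_y^q\|_F^2=q$ is bounded. Invariance gives $QMQ^\top=M$ for all $Q\in O(d)$, so $M$ commutes with every orthogonal matrix.

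\textbf{Step 2: $M$ is a scalar matrix.} A symmetric matrix commuting with all of $O(d)$ must be a multiple of the identity. To see this directly, first take $Q$ to be a diagonal sign-flip matrix; this forces every off-diagonal entry to vanish. Then take $Q$ to be a permutation matrix; this forces all diagonal entries to be equal. Hence $M=cI_d$.

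Next we identify $c$. Every rank-$q$ orthogonal projector has $\mathrm{tr}(P_y^q)=q$. By linearity of expectation, $\mathrm{tr}(M)=q$, so $cd=q$ and $M=\frac{q}{d}I_d$.

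\textbf{Step 3: conditioning on $U_x^q$.} The overlap satisfies $O_q=\frac1q\mathrm{tr}(P_x^qP_y^q)$. Since $U_y^q$ is independent of $U_x^q$, we condition on $P_x^q$ and use linearity of the trace:
\[
\mathbb{E}[O_q\mid P_x^q]=\tfrac1q\mathrm{tr}\bigl(P_x^q\,\mathbb{E}[P_y^q]\bigr)=\tfrac1q\cdot\tfrac qd\,\mathrm{tr}(P_x^q)=\tfrac qd .
\]
Taking the outer expectation then gives $\mathbb{E}[O_q]=q/d$.

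The only step that needs care is Step 1. One must state precisely that Haar measure on the Grassmannian is, by definition, the unique $O(d)$-invariant probability measure, so that $QP_y^qQ^\top\overset{d}{=}P_y^q$ holds for every $Q\in O(d)$.

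If we realize $P_y^q$ instead as $GG^\top$, where $G$ consists of the first $q$ columns of a Haar-random orthogonal matrix, then invariance follows from left-invariance of Haar measure on $O(d)$. As an alternative check, one can compute $\mathbb{E}[(P_y^q)_{jj}]=\mathbb{E}\|G^\top e_j\|^2$ directly: since the rows of $G$ are exchangeable and their squared norms sum to $q$, each row has expected squared norm $q/d$.
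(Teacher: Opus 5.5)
Your proposal is correct and follows the same route as the paper. Rotation invariance of the Haar measure forces $\mathbb{E}[P_y^q]$ to commute with every element of $O(d)$; the trace constraint $\mathrm{tr}(P_y^q)=q$ then pins it to $(q/d)I_d$; and linearity of the trace together with independence gives $\mathbb{E}[O_q]=q/d$. Your sign-flip and permutation argument for why the matrix is scalar, and your explicit conditioning on $P_x^q$, fill in details that the paper leaves implicit.
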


\begin{proof}
By the invariance of the Haar measure under the action of the orthogonal group, \(\mathbb{E}[P_y^q]\) must commute with every orthogonal matrix. Hence it must be a scalar multiple of the identity matrix. Since \(\mathrm{tr}(P_y^q)=q\), we obtain \(\mathbb{E}[P_y^q]=(q/d)I_d\). Therefore,
\[
\mathbb{E}[O_q]
=
\frac{1}{q}
\mathrm{tr}\!\left(P_x^q\mathbb{E}[P_y^q]\right)
=
\frac{1}{q}
\mathrm{tr}\!\left(P_x^q\frac{q}{d}I_d\right)
=
\frac{q}{d}.
\]
\end{proof}

Thus, when the empirical overlap satisfies \(O_q \gg q/d\), the two modalities do not use randomly unrelated dominant directions. Instead, they share a non-random set of dominant geometric directions.

\subsubsection{Implication}
\label{app:dominant-geometry-implication}

Spectral compatibility and principal subspace overlap together indicate that the modality gap is not an arbitrary discrepancy between two unrelated distributions. Rather, it is a structured discrepancy built upon a partially shared dominant semantic-geometric backbone. This observation provides a necessary premise for alignment: the transformation should not freely distort the source-modality representation, but should preserve its existing semantic geometry while correcting the residual structure that prevents compatibility with the target modality.

\subsection{Mean--Residual Decomposition}
\label{app:mean-residual-decomposition}

\subsubsection{Decomposition Identity}
\label{app:decomposition-identity}

For a paired representation \((x,y)\), we have
\[
x-y=(\mu_x-\mu_y)+(\bar{x}-\bar{y}).
\]
Since \(\mathbb{E}[\bar{x}-\bar{y}]=0\), it follows that
\[
\mathbb{E}\!\left[
\left\langle
\mu_x-\mu_y,\,
\bar{x}-\bar{y}
\right\rangle
\right]=0.
\]
Therefore, the expected squared cross-modal discrepancy admits the orthogonal decomposition
\[
\mathbb{E}\|x-y\|_2^2
=
\|\mu_x-\mu_y\|_2^2
+
\mathbb{E}\|\bar{x}-\bar{y}\|_2^2.
\]
This decomposition shows that the modality gap contains at least two components: the first-order centroid displacement \(\|\mu_x-\mu_y\|_2^2\) and the centered residual discrepancy \(\mathbb{E}\|\bar{x}-\bar{y}\|_2^2\). Therefore, global mean displacement can only explain first-order centroid mismatch, but not the structured discrepancy that remains after centering.

\subsubsection{Residual after Centroid Correction}
\label{app:residual-after-centroid-correction}

Consider the global centroid correction applied to the text representation,
\[
y^x \coloneqq y-\mu_y+\mu_x.
\]
The paired residual after this correction is
\[
r
\coloneqq
x-y^x
=
(x-\mu_x)-(y-\mu_y)
=
\bar{x}-\bar{y}.
\]
Thus \(\mathbb{E}[r]=0\), and its covariance matrix is
\[
\Sigma_r
\coloneqq
\mathbb{E}[rr^{\top}].
\]
Expanding this expression gives
\[
\Sigma_r
=
\Sigma_x+\Sigma_y-\Sigma_{xy}-\Sigma_{yx}.
\]
This identity shows that the centered residual is determined not only by the marginal covariance structures of the two modalities, but also by their cross-modal correspondence structure \(\Sigma_{xy}\).

The squared residual energy remaining after centroid correction is
\[
\mathbb{E}\|r\|_2^2=\mathrm{tr}(\Sigma_r).
\]
Therefore, if the modality gap were mainly dominated by a global centroid displacement, then \(\mathrm{tr}(\Sigma_r)\) should become small after centroid correction. Conversely, if the residual remains large, then the modality gap cannot be explained as a simple global mean shift.

\subsubsection{Residual Ratio}
\label{app:residual-ratio}

To compare across datasets or embedding scales, we may define the energy-based residual ratio as
\[
R_{\mathrm{energy}}
\coloneqq
\frac{\mathbb{E}\|r\|_2^2}{\mathbb{E}\|x-y\|_2^2}
=
\frac{\mathrm{tr}(\Sigma_r)}
{\|\mu_x-\mu_y\|_2^2+\mathrm{tr}(\Sigma_r)}.
\]
If the main text reports average distances rather than squared energies, the corresponding distance-based ratio can be defined as
\[
R_{\mathrm{dist}}
\coloneqq
\frac{\mathbb{E}\|r\|_2}{\mathbb{E}\|x-y\|_2}.
\]
Under either definition, a ratio close to \(1\) indicates that most of the cross-modal discrepancy remains in the centered residual after removing the global centroid displacement. The large residual ratio observed in the main text therefore rejects the simple explanation that the modality gap is primarily a centroid bias.

\subsection{Isotropic Residual Null Hypothesis}
\label{app:isotropic-residual-null}

After removing the centroid displacement, a natural null hypothesis is that the remaining residual is merely isotropic noise. Under this hypothesis, the residual has equal variance in all directions and does not contain any dominant geometric structure.

\subsubsection{Null Hypothesis}
\label{app:null-hypothesis}

We formalize the isotropic residual null hypothesis as
\[
\mathcal{H}_0:\qquad \Sigma_r=\sigma^2 I_d
\]
for some \(\sigma>0\). Under \(\mathcal{H}_0\), all eigenvalues of \(\Sigma_r\) are equal:
\[
\lambda_1(\Sigma_r)=\lambda_2(\Sigma_r)=\cdots=\lambda_d(\Sigma_r)=\sigma^2.
\]
This hypothesis implies three direct spectral properties.

\subsubsection{Residual Anisotropy Ratio}
\label{app:anisotropy-ratio}

We define the residual anisotropy ratio as
\[
A_r
\coloneqq
\frac{\lambda_{\max}(\Sigma_r)}
{\mathrm{tr}(\Sigma_r)/d},
\]
where \(\lambda_{\max}(\Sigma_r)\) is the largest eigenvalue of the residual covariance and \(\mathrm{tr}(\Sigma_r)/d\) is the average eigenvalue. Since the largest eigenvalue is no smaller than the average eigenvalue, \(A_r\geq 1\). Under the isotropic null hypothesis, all eigenvalues are equal, hence \(A_r=1\). Therefore, an empirical observation of \(A_r\gg 1\) indicates that some directions carry residual energy far above the average level, contradicting the isotropic-noise hypothesis.

\subsubsection{Cumulative Spectral Energy}
\label{app:cumulative-spectral-energy}

We define the cumulative energy explained by the top \(K\) residual eigen-directions as
\[
E(K)
\coloneqq
\frac{\sum_{j=1}^{K}\lambda_j(\Sigma_r)}
{\sum_{j=1}^{d}\lambda_j(\Sigma_r)}.
\]
Under the isotropic null hypothesis,
\[
E(K)=\frac{K}{d}.
\]
Thus, if the empirical curve satisfies \(E(K)\gg K/d\) for small \(K\), the residual energy is concentrated in a small number of dominant directions.

\subsubsection{Effective Dimension}
\label{app:effective-dimension}

The effective dimension of the residual covariance is defined as
\[
d_{\mathrm{eff}}(\Sigma_r)
\coloneqq
\frac{\mathrm{tr}(\Sigma_r)^2}{\mathrm{tr}(\Sigma_r^2)}
=
\frac{\left(\sum_j \lambda_j(\Sigma_r)\right)^2}
{\sum_j \lambda_j(\Sigma_r)^2}.
\]
By the Cauchy--Schwarz inequality, \(1\leq d_{\mathrm{eff}}(\Sigma_r)\leq d\). Under the isotropic null hypothesis, \(d_{\mathrm{eff}}(\Sigma_r)=d\). Hence, an empirical observation of \(d_{\mathrm{eff}}(\Sigma_r)/d\ll 1\) indicates that the residual distribution has an effective dimension much smaller than the ambient dimension.

\subsubsection{Implication}
\label{app:isotropic-null-implication}

The isotropic residual null hypothesis is rejected by any of the following empirical patterns:
\[
A_r\gg 1,
\qquad
E(K)\gg \frac{K}{d},
\qquad
\frac{d_{\mathrm{eff}}(\Sigma_r)}{d}\ll 1.
\]
The residual spectrum reported in the main text satisfies these conditions simultaneously. This indicates that the centered residual is not unstructured isotropic noise, but a structured anisotropic residual concentrated along a small number of dominant directions.

\subsection{Efficiency of Dominant Residual-Direction Correction}
\label{app:dominant-residual-correction}

The previous subsection shows that the centered residual energy is concentrated along a small number of dominant directions. We now show that if a correction is restricted to act within a \(K\)-dimensional subspace, then choosing the top \(K\) eigen-directions of the residual covariance is optimal for minimizing the remaining squared residual energy.

\subsubsection{Optimal Projection Result}
\label{app:optimal-projection}

Let the eigendecomposition of the residual covariance be
\[
\Sigma_r=U\Lambda U^{\top},
\qquad
\Lambda=\mathrm{diag}(\lambda_1,\ldots,\lambda_d),
\]
where \(\lambda_1\geq\lambda_2\geq\cdots\geq\lambda_d\geq 0\). Consider an oracle correction that removes the residual component in some \(K\)-dimensional subspace. Let \(P\) be the orthogonal projector onto that subspace. The corrected residual is \((I-P)r\), and the expected remaining residual energy is
\[
J(P)
\coloneqq
\mathbb{E}\|(I-P)r\|_2^2.
\]
Since
\[
J(P)
=
\mathrm{tr}\!\left((I-P)\Sigma_r\right)
=
\mathrm{tr}(\Sigma_r)-\mathrm{tr}(P\Sigma_r),
\]
minimizing \(J(P)\) is equivalent to maximizing \(\mathrm{tr}(P\Sigma_r)\).

\begin{proposition}[Optimal rank-constrained residual correction]
\label{prop:optimal-residual-subspace}
Among all rank-\(K\) orthogonal projectors, the projector onto the subspace spanned by the top \(K\) eigenvectors of \(\Sigma_r\) minimizes the expected remaining residual energy:
\[
P_K^{\star}
=
\arg\min_{\mathrm{rank}(P)=K}
\mathbb{E}\|(I-P)r\|_2^2.
\]
The minimum value is
\[
\min_{\mathrm{rank}(P)=K}
\mathbb{E}\|(I-P)r\|_2^2
=
\sum_{j>K}\lambda_j(\Sigma_r).
\]
\end{proposition}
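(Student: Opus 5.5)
The argument rests on the identity already derived above, $J(P)=\mathrm{tr}(\Sigma_r)-\mathrm{tr}(P\Sigma_r)$, which uses $P^2=P=P^\top$ and $\mathbb{E}[rr^\top]=\Sigma_r$. It therefore suffices to show that, over rank-$K$ orthogonal projectors $P$, the quantity $\mathrm{tr}(P\Sigma_r)$ is at most $\sum_{j\le K}\lambda_j(\Sigma_r)$, and that $P_K^\star$ attains this value. This is Ky Fan's maximum principle; I will give a short self-contained proof instead of merely citing it.

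First, diagonalize with $\Sigma_r=U\Lambda U^\top$ and set $\tilde P\coloneqq U^\top P U$. This is again a rank-$K$ orthogonal projector, and $\mathrm{tr}(P\Sigma_r)=\mathrm{tr}(\tilde P\Lambda)=\sum_{j=1}^d p_j\lambda_j$, where $p_j\coloneqq \tilde P_{jj}=\|\tilde P e_j\|_2^2$. The weights satisfy two constraints:
\begin{itemize}
\item $0\le p_j\le 1$, because $\tilde P e_j$ is the projection of a unit vector;
\item $\sum_j p_j=\mathrm{tr}(\tilde P)=K$.
\end{itemize}

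Second, I bound this linear functional over the set $\{p\in[0,1]^d:\sum_j p_j=K\}$. Since $\lambda_1\ge\cdots\ge\lambda_d$, a rearrangement or exchange argument applies. Write
\[
\sum_{j\le K}\lambda_j-\sum_j p_j\lambda_j=\sum_{j\le K}(1-p_j)\lambda_j-\sum_{j>K}p_j\lambda_j.
\]
The first sum is at least $\lambda_K\sum_{j\le K}(1-p_j)$ and the second is at most $\lambda_K\sum_{j>K}p_j$. These two multipliers of $\lambda_K$ are equal because $\sum_j p_j=K$. The difference is therefore nonnegative, which gives the upper bound $\sum_{j\le K}\lambda_j$.

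Third, for achievability, take $P_K^\star=U_{:,1:K}U_{:,1:K}^\top$. This choice gives $p_j=1$ for $j\le K$ and $p_j=0$ otherwise, so it attains the bound. The minimum residual energy is then $\mathrm{tr}(\Sigma_r)-\sum_{j\le K}\lambda_j=\sum_{j>K}\lambda_j(\Sigma_r)$.

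The main point of care is the exchange inequality in the second step; the rest is bookkeeping. I should also remark that the minimizer is unique only when $\lambda_K>\lambda_{K+1}$. With ties, the statement's ``$\arg\min$'' should be read as ``$P_K^\star$ is a minimizer'', namely any projector onto a top-$K$ eigenspace.
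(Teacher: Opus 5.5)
Your proposal is correct and follows the paper's proof: both reduce the problem to maximizing $\mathrm{tr}(P\Sigma_r)$ via $J(P)=\mathrm{tr}(\Sigma_r)-\mathrm{tr}(P\Sigma_r)$ and then apply the Ky Fan maximum principle, which the paper simply cites while you prove it directly using the diagonal weights $p_j=(U^\top P U)_{jj}\in[0,1]$ with $\sum_j p_j=K$ and the exchange inequality around $\lambda_K$. Your remark that the minimizer is unique only when $\lambda_K>\lambda_{K+1}$ (otherwise any projector onto a top-$K$ eigenspace is a minimizer) is a correct refinement of the paper's $\arg\min$ notation.
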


\begin{proof}
By the Ky Fan maximum principle,
\[
\max_{\mathrm{rank}(P)=K}\mathrm{tr}(P\Sigma_r)
=
\sum_{j=1}^{K}\lambda_j(\Sigma_r),
\]
and the maximum is attained by the projector onto the top-\(K\) eigenspace of \(\Sigma_r\). Substituting this into \(J(P)=\mathrm{tr}(\Sigma_r)-\mathrm{tr}(P\Sigma_r)\) yields
\[
\min_{\mathrm{rank}(P)=K}J(P)
=
\mathrm{tr}(\Sigma_r)-\sum_{j=1}^{K}\lambda_j(\Sigma_r)
=
\sum_{j>K}\lambda_j(\Sigma_r).
\]
\end{proof}

\subsubsection{Comparison with Random Correction}
\label{app:random-correction-comparison}

If the rank-\(K\) correction subspace is chosen randomly, with projector \(P_{\mathrm{rand}}\), then
\[
\mathbb{E}[P_{\mathrm{rand}}]=\frac{K}{d}I_d.
\]
Therefore, the expected residual energy removed by a random subspace is
\[
\mathbb{E}\!\left[\mathrm{tr}(P_{\mathrm{rand}}\Sigma_r)\right]
=
\frac{K}{d}\mathrm{tr}(\Sigma_r).
\]
By contrast, the top \(K\) residual eigen-directions remove
\[
\sum_{j=1}^{K}\lambda_j(\Sigma_r)
=
E(K)\mathrm{tr}(\Sigma_r).
\]
The ratio between dominant-direction correction and random correction is
\[
\frac{\sum_{j=1}^{K}\lambda_j(\Sigma_r)}
{(K/d)\mathrm{tr}(\Sigma_r)}
=
\frac{E(K)}{K/d}.
\]
When \(E(K)\gg K/d\), correcting the dominant residual directions is substantially more efficient than correcting random directions or applying isotropic perturbations. In particular, when \(K=1\), this gain is exactly
\[
A_r
=
\frac{\lambda_{\max}(\Sigma_r)}
{\mathrm{tr}(\Sigma_r)/d}.
\]
Thus, stronger residual anisotropy implies a larger advantage for dominant-direction correction over random correction.

\subsubsection{From the Residual Principal Subspace to the Joint Covariance Subspace}
\label{app:joint-covariance-surrogate}

The oracle analysis above indicates that, if paired residuals were available, the most direct correction target would be the principal subspace of \(\Sigma_r\). However, in the unpaired alignment setting, estimating \(\Sigma_r\) requires cross-modal correspondence through
\[
\Sigma_r=\Sigma_x+\Sigma_y-\Sigma_{xy}-\Sigma_{yx}.
\]
The cross-modal term \(\Sigma_{xy}\) cannot be reliably estimated from unpaired samples alone. Therefore, directly using the residual covariance to define the correction subspace is not suitable for unpaired alignment.

The proposed method instead uses the joint marginal covariance
\[
\Sigma_J
\coloneqq
\Sigma_x+\Sigma_y+\lambda I
\]
and takes its top \(r\) eigenvectors to define the dominant subspace \(U\). This choice should not be interpreted as a claim that the principal subspace of \(\Sigma_J\) is strictly identical to that of \(\Sigma_r\). Rather, \(\Sigma_J\) provides a computable unpaired surrogate based only on marginal statistics.

This surrogate is motivated by two observations. First, the spectral compatibility and principal subspace overlap in Sec.~\ref{app:dominant-geometric-compatibility} indicate that image and text representations already share a non-random dominant geometric backbone. Hence, the principal subspace of \(\Sigma_x+\Sigma_y\) captures high-variance geometric directions jointly occupied by both modalities. Second, the residual spectral diagnostics in Sec.~\ref{app:isotropic-residual-null} show that the remaining modality gap is not uniformly distributed over all directions, but concentrated in a low-effective-dimensional structure. Therefore, applying structured correction within the shared dominant geometric backbone is more consistent with the residual geometry than either unconstrained full-space mappings or isotropic perturbations.

To empirically verify whether the joint dominant subspace captures residual energy, one may report the residual-energy coverage ratio
\[
\eta_U
\coloneqq
\frac{\mathrm{tr}(P_U\Sigma_r)}
{\mathrm{tr}(\Sigma_r)},
\]
where \(P_U\) is the orthogonal projector onto the top-\(r\) eigenspace of \(\Sigma_J\). A high value of \(\eta_U\) indicates that the joint dominant subspace captures a large fraction of the residual energy, further supporting its use as an unpaired surrogate correction subspace.

\subsubsection{Implication}
\label{app:dominant-correction-implication}

This subsection yields two conclusions. First, in the oracle setting where the residual covariance is available, correcting the dominant residual directions is optimal for reducing squared residual energy under a rank constraint. Second, in the practical unpaired setting, where the residual covariance cannot be directly constructed, the dominant eigenspace of the joint marginal covariance provides a computable surrogate. Its use is motivated by the observed dominant geometric compatibility and can be further validated by the residual-energy coverage ratio \(\eta_U\).

\subsection{Non-identifiability of Distribution Matching Alone}
\label{app:nonidentifiability}

The previous subsection clarifies which directions should be corrected. We now explain why matching the target distribution alone is insufficient.

Let \(P_Y\) and \(P_X\) denote the source- and target-modality representation distributions, respectively. A distribution-matching alignment seeks a map \(T\) such that
\[
T_{\#}P_Y=P_X,
\]
where \(T_{\#}P_Y\) denotes the pushforward distribution of \(P_Y\) under \(T\). However, this condition alone does not identify a semantics-preserving alignment map.

\begin{proposition}[Non-identifiability of marginal distribution matching]
\label{prop:nonidentifiability}
Suppose \(T_0\) satisfies \((T_0)_{\#}P_Y=P_X\). For any measurable transformation \(S\) that preserves the target distribution, i.e., \(S_{\#}P_X=P_X\), the composite map \(S\circ T_0\) also satisfies \((S\circ T_0)_{\#}P_Y=P_X\).
\end{proposition}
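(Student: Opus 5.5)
The statement follows from functoriality of the pushforward: for measurable maps \(T_0\) and \(S\), one has \((S\circ T_0)_{\#}P_Y=S_{\#}\bigl((T_0)_{\#}P_Y\bigr)\). Substituting the hypothesis \((T_0)_{\#}P_Y=P_X\) and then \(S_{\#}P_X=P_X\) gives the claim. I will carry this out at the level of measurable sets, so that no extra structure is needed.

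First, fix a measurable set \(B\subseteq\mathbb{R}^d\). By definition of the pushforward,
\[
(S\circ T_0)_{\#}P_Y(B)=P_Y\bigl((S\circ T_0)^{-1}(B)\bigr)=P_Y\bigl(T_0^{-1}(S^{-1}(B))\bigr).
\]
This uses only the set identity \((S\circ T_0)^{-1}(B)=T_0^{-1}(S^{-1}(B))\). Measurability of \(S\) guarantees that \(S^{-1}(B)\) is measurable, and measurability of \(T_0\) then makes the preimage under \(T_0\) measurable.

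Second, I read the last expression as \((T_0)_{\#}P_Y\bigl(S^{-1}(B)\bigr)\). The hypothesis \((T_0)_{\#}P_Y=P_X\) turns this into \(P_X\bigl(S^{-1}(B)\bigr)=S_{\#}P_X(B)\). The invariance assumption \(S_{\#}P_X=P_X\) then gives \(P_X(B)\). Since \(B\) was arbitrary, the two measures agree on all measurable sets, which is the claimed equality.

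There is no substantive obstacle. The only points to check are that the composite is measurable and that the \(\sigma\)-algebras match; both hold because everything lives on \(\mathbb{R}^d\) with its Borel \(\sigma\)-algebra and \(S,T_0\) are assumed Borel measurable. After the proof I would add a remark on how this yields non-identifiability. Nontrivial target-preserving maps \(S\) exist in general: for example, measure-preserving rearrangements of the support of \(P_X\), or orthogonal symmetries of \(P_X\) if any exist. Such an \(S\) changes which target point each \(y\) is sent to, and hence can break \(\sigma(T(y))=\sigma(y)\), while leaving the matched marginal unchanged. The map \(T_{\mathrm{perm}}\) from Sec.~3 is the empirical instance of this.
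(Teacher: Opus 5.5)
Your proof is correct and follows the paper's argument. The paper simply invokes the composition property \((S\circ T_0)_{\#}P_Y=S_{\#}\bigl((T_0)_{\#}P_Y\bigr)=S_{\#}P_X=P_X\), and you verify that same identity explicitly on measurable sets; your closing remark linking non-identifiability to \(T_{\mathrm{perm}}\) also matches the paper's follow-up discussion.
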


\begin{proof}
By the composition property of pushforward measures,
\[
(S\circ T_0)_{\#}P_Y
=
S_{\#}\!\left((T_0)_{\#}P_Y\right)
=
S_{\#}P_X
=
P_X.
\]
\end{proof}

Therefore, \(S\circ T_0\) is equally valid as \(T_0\) under marginal distribution matching. However, different choices of \(S\) may arbitrarily permute or distort instance-level semantic correspondence while preserving the same target marginal distribution. Hence, target distribution matching alone cannot distinguish a semantics-preserving alignment from a semantically destructive transformation with the correct marginal distribution.

This explains the role of the random target replacement \(T_{\mathrm{perm}}\) in the main text. It can match the target-modality distribution, but destroys the semantic correspondence between the original source sample and its transformed representation. Effective modality alignment must therefore impose, either explicitly or implicitly, additional constraints that preserve the semantic geometry of the source modality.

\subsection{Semantic Preservation under Bounded Correction}
\label{app:bounded-correction}

The previous subsection shows that target distribution matching alone does not guarantee semantic preservation. We now show that bounded correction controls the distortion of source-modality semantic structure.

\subsubsection{Similarity Preservation under Additive Perturbation}
\label{app:additive-perturbation}

Let
\[
T(y)=y+\delta(y),
\]
where \(\delta(y)\) is the correction applied to the source representation. Assume the source representation is normalized, i.e., \(\|y\|_2=1\), and the correction satisfies \(\|\delta(y)\|_2\leq\varepsilon\). For two source samples \(y_i\) and \(y_j\), let
\[
z_i=y_i+\delta_i,
\qquad
z_j=y_j+\delta_j,
\]
where \(\delta_i=\delta(y_i)\) and \(\delta_j=\delta(y_j)\).

\begin{lemma}[Similarity stability under bounded correction]
\label{lem:bounded-correction}
If \(\|y_i\|_2=\|y_j\|_2=1\) and \(\|\delta_i\|_2,\|\delta_j\|_2\leq\varepsilon\), then
\[
\left|
\langle z_i,z_j\rangle
-
\langle y_i,y_j\rangle
\right|
\leq
2\varepsilon+\varepsilon^2.
\]
If, in addition, \(\varepsilon<1\) and \(\hat{z}_i=z_i/\|z_i\|_2\), then
\[
\|\hat{z}_i-y_i\|_2
\leq
\frac{2\varepsilon}{1-\varepsilon},
\qquad
\left|
\langle \hat{z}_i,\hat{z}_j\rangle
-
\langle y_i,y_j\rangle
\right|
\leq
\frac{4\varepsilon}{1-\varepsilon}.
\]
\end{lemma}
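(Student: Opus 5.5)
We expand the inner product and bound each term with Cauchy--Schwarz, then reduce the normalized statements to a single point-wise bound on \(\|\hat z_i-y_i\|_2\).

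\textbf{Unnormalized bound.} Writing \(\langle z_i,z_j\rangle=\langle y_i,y_j\rangle+\langle y_i,\delta_j\rangle+\langle \delta_i,y_j\rangle+\langle\delta_i,\delta_j\rangle\), we bound the three perturbation terms by \(\|y_i\|_2\|\delta_j\|_2\le\varepsilon\), \(\|\delta_i\|_2\|y_j\|_2\le\varepsilon\), and \(\|\delta_i\|_2\|\delta_j\|_2\le\varepsilon^2\). The triangle inequality then gives \(2\varepsilon+\varepsilon^2\).

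\textbf{Normalized point-wise bound.} Since \(\varepsilon<1\), the reverse triangle inequality gives \(\|z_i\|_2\ge 1-\varepsilon>0\), so \(\hat z_i\) is well defined. We also have \(\bigl|\|z_i\|_2-1\bigr|\le\|\delta_i\|_2\le\varepsilon\). Decompose
\[
\hat z_i-y_i=\frac{z_i-y_i}{\|z_i\|_2}+y_i\Bigl(\frac{1}{\|z_i\|_2}-1\Bigr).
\]
The first term has norm at most \(\varepsilon/(1-\varepsilon)\). The second has norm \(\bigl|1-\|z_i\|_2\bigr|/\|z_i\|_2\le\varepsilon/(1-\varepsilon)\), using \(\|y_i\|_2=1\). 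Summing yields \(\|\hat z_i-y_i\|_2\le 2\varepsilon/(1-\varepsilon)\). (A sharper bound via the fact that normalization is a nearest-point projection onto the sphere is available, but it is not needed.)

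\textbf{Normalized similarity bound.} Write
\[
\langle\hat z_i,\hat z_j\rangle-\langle y_i,y_j\rangle=\langle\hat z_i-y_i,\hat z_j\rangle+\langle y_i,\hat z_j-y_j\rangle .
\]
Because \(\|\hat z_j\|_2=\|y_i\|_2=1\), Cauchy--Schwarz bounds the two terms by \(\|\hat z_i-y_i\|_2\) and \(\|\hat z_j-y_j\|_2\), respectively. The point-wise bound, applied to each index, then gives \(4\varepsilon/(1-\varepsilon)\).

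The only step requiring care is the point-wise normalization bound. The norm of \(z_i\) must be controlled from both sides, and the error must be split into a direction part and a rescaling part; the rest is routine Cauchy--Schwarz bookkeeping.
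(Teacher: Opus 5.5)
Your proposal is correct and follows the paper's proof essentially step for step: the same three-term expansion with Cauchy--Schwarz for the unnormalized inner product, the same split of \(\hat z_i-y_i\) into \(\delta_i/\|z_i\|_2\) plus the rescaling term \(y_i(1/\|z_i\|_2-1)\), and the same decomposition \(\langle\hat z_i-y_i,\hat z_j\rangle+\langle y_i,\hat z_j-y_j\rangle\) for the normalized similarity. One small improvement over the paper: you state the two-sided bound \(\bigl|\|z_i\|_2-1\bigr|\le\varepsilon\) explicitly, whereas the paper uses it without stating it when it bounds \(\bigl|1/\|z_i\|_2-1\bigr|\le\varepsilon/(1-\varepsilon)\).
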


\begin{proof}
For the unnormalized representations,
\[
\begin{aligned}
\left|
\langle z_i,z_j\rangle
-
\langle y_i,y_j\rangle
\right|
&=
\left|
\langle y_i+\delta_i,y_j+\delta_j\rangle
-
\langle y_i,y_j\rangle
\right| \\
&=
\left|
\langle \delta_i,y_j\rangle
+
\langle y_i,\delta_j\rangle
+
\langle \delta_i,\delta_j\rangle
\right| \\
&\leq
|\langle \delta_i,y_j\rangle|
+
|\langle y_i,\delta_j\rangle|
+
|\langle \delta_i,\delta_j\rangle| \\
&\leq
\varepsilon+\varepsilon+\varepsilon^2
=
2\varepsilon+\varepsilon^2.
\end{aligned}
\]
For the normalized representation, \(\|z_i\|_2=\|y_i+\delta_i\|_2\geq 1-\varepsilon\). Therefore,
\[
\begin{aligned}
\|\hat{z}_i-y_i\|_2
&=
\left\|
\frac{y_i+\delta_i}{\|z_i\|_2}
-
y_i
\right\|_2 \\
&\leq
\frac{\|\delta_i\|_2}{\|z_i\|_2}
+
\left|
\frac{1}{\|z_i\|_2}-1
\right|\|y_i\|_2 \\
&\leq
\frac{\varepsilon}{1-\varepsilon}
+
\frac{\varepsilon}{1-\varepsilon}
=
\frac{2\varepsilon}{1-\varepsilon}.
\end{aligned}
\]
Finally,
\[
\begin{aligned}
\left|
\langle \hat{z}_i,\hat{z}_j\rangle
-
\langle y_i,y_j\rangle
\right|
&=
\left|
\langle \hat{z}_i-y_i,\hat{z}_j\rangle
+
\langle y_i,\hat{z}_j-y_j\rangle
\right| \\
&\leq
\|\hat{z}_i-y_i\|_2\|\hat{z}_j\|_2
+
\|y_i\|_2\|\hat{z}_j-y_j\|_2 \\
&\leq
\frac{4\varepsilon}{1-\varepsilon}.
\end{aligned}
\]
\end{proof}

Thus, as long as the Euclidean norm of the correction is controlled, the pairwise inner-product structure of the source modality cannot be arbitrarily distorted.

\subsubsection{Connection to Stage-II Bounded Residual Refinement}
\label{app:stage-ii-bound}

Stage II in the proposed method does not predict an unconstrained free mapping. Instead, it performs bounded correction on the phase, radius, and \(V\)-subspace components:
\[
\hat{\theta}
=
\mathrm{wrap}\!\left(
\theta^{(0)}
+
\alpha_{\theta}\tanh(\Delta\theta)
\right),
\]
\[
\hat{\rho}_k
=
\rho_k^{(0)}
\exp\!\left(
\alpha_{\rho}\tanh(\Delta\rho_k)
\right),
\]
\[
\hat{v}
=
v^{(0)}
+
\alpha_v\tanh(\Delta v_V).
\]
Since \(|\tanh(\cdot)|\leq 1\), each type of correction is explicitly bounded:
\[
|\hat{\theta}_k-\theta_k^{(0)}|
\leq
\alpha_{\theta},
\qquad
\frac{\hat{\rho}_k}{\rho_k^{(0)}}
\in
[e^{-\alpha_{\rho}},e^{\alpha_{\rho}}],
\]
and the \(V\)-side correction is controlled by \(\alpha_v\).

For the \(k\)-th two-dimensional polar block, let the initialized Cartesian coordinate be
\[
c_k^{(0)}
=
\rho_k^{(0)}
(\cos\theta_k^{(0)},\sin\theta_k^{(0)}),
\]
and the updated coordinate be
\[
\hat{c}_k
=
\hat{\rho}_k
(\cos\hat{\theta}_k,\sin\hat{\theta}_k).
\]
Let \(s_k\coloneqq \hat{\rho}_k/\rho_k^{(0)}\) and let \(\Delta_k\coloneqq \hat{\theta}_k-\theta_k^{(0)}\) denote the wrapped angular difference. Then \(s_k\in[e^{-\alpha_{\rho}},e^{\alpha_{\rho}}]\) and \(|\Delta_k|\leq \alpha_{\theta}\). In complex notation,
\[
\frac{\hat{c}_k-c_k^{(0)}}{\rho_k^{(0)}}
=
e^{i\theta_k^{(0)}}(s_ke^{i\Delta_k}-1).
\]
Thus,
\[
\|\hat{c}_k-c_k^{(0)}\|_2^2
=
(\rho_k^{(0)})^2
|s_ke^{i\Delta_k}-1|^2.
\]
Furthermore,
\[
|s_ke^{i\Delta_k}-1|^2
=
(s_k-1)^2
+
4s_k\sin^2(\Delta_k/2).
\]
Using the bounds on \(s_k\) and \(\Delta_k\), we obtain
\[
\|\hat{c}_k-c_k^{(0)}\|_2
\leq
\rho_k^{(0)}
\kappa(\alpha_{\theta},\alpha_{\rho}),
\]
where
\[
\kappa(\alpha_{\theta},\alpha_{\rho})
\coloneqq
\sqrt{
(e^{\alpha_{\rho}}-1)^2
+
4e^{\alpha_{\rho}}\sin^2(\alpha_{\theta}/2)
}.
\]
For small \(\alpha_{\theta}\) and \(\alpha_{\rho}\), we have the approximation
\[
\kappa(\alpha_{\theta},\alpha_{\rho})
\approx
\sqrt{\alpha_{\theta}^2+\alpha_{\rho}^2}.
\]
This shows that phase and radial corrections jointly induce a controlled local Euclidean perturbation.

For the entire \(U\)-subspace, let \(c^{(0)}\) be the concatenation of all two-dimensional blocks. Orthogonality across blocks gives
\[
\|\hat{c}-c^{(0)}\|_2
\leq
\kappa(\alpha_{\theta},\alpha_{\rho})
\|c^{(0)}\|_2.
\]
If the \(V\)-side correction is further controlled by norm clipping or regularization such that
\[
\|\hat{v}-v^{(0)}\|_2\leq \beta_v,
\]
then the overall unnormalized correction satisfies
\[
\|(\hat{c},\hat{v})-(c^{(0)},v^{(0)})\|_2
\leq
\kappa(\alpha_{\theta},\alpha_{\rho})\|c^{(0)}\|_2
+
\beta_v.
\]
Since \(\|c^{(0)}\|_2\leq 1\), we further have
\[
\|(\hat{c},\hat{v})-(c^{(0)},v^{(0)})\|_2
\leq
\kappa(\alpha_{\theta},\alpha_{\rho})
+
\beta_v.
\]
Define the effective perturbation radius
\[
\varepsilon_{\mathrm{eff}}
\coloneqq
\kappa(\alpha_{\theta},\alpha_{\rho})
+
\beta_v.
\]
When \(\varepsilon_{\mathrm{eff}}<1\), Lemma~\ref{lem:bounded-correction} applies directly to the Stage-II bounded residual refinement.

If no explicit \(V\)-side norm clipping is used in implementation, the quantity \(\|\hat{v}-v^{(0)}\|_2\) can instead be monitored empirically and controlled through \(\alpha_v\), regularization, or early stopping. In this case, Lemma~\ref{lem:bounded-correction} provides a conditional guarantee: as long as the realized correction norm remains small, the source-modality similarity structure is stably preserved.

\subsubsection{Implication}
\label{app:bounded-correction-implication}

The above analysis shows that the bounded parameterization of Stage II is not merely an engineering choice. It provides a tunable mechanism for balancing semantic preservation and target-distribution compatibility. Larger values of \(\alpha_{\theta}\), \(\alpha_{\rho}\), and \(\alpha_v\) allow stronger geometric correction, but may increase the risk of distorting the semantic structure of the source modality. Smaller correction scales better preserve the source geometry, but may be insufficient for entering the target-modality support. Effective alignment therefore requires a structured trade-off between correction strength and semantic stability.

\subsection{Geometric Motivation for the Periodic Phase Prior}
\label{app:phase-prior-motivation}

The previous sections explain why dominant residual directions should be corrected and why the correction should be bounded. We now explain why the proposed method uses two-dimensional blockwise polar coordinates in the dominant subspace and learns a target-modality prior in phase space.

\subsubsection{Two-Dimensional Blockwise Polar Decomposition}
\label{app:blockwise-polar}

Within the dominant subspace \(U\), the method partitions the projected coordinates into multiple two-dimensional blocks. For the \(k\)-th block, let its Cartesian coordinate be \((a_k,b_k)\). The corresponding polar coordinates are
\[
\rho_k
=
\sqrt{a_k^2+b_k^2+\epsilon},
\qquad
\theta_k
=
\mathrm{atan2}(b_k,a_k).
\]
Here, \(\rho_k\) represents the radial magnitude or energy of the block, whereas \(\theta_k\) represents its direction or phase.

This decomposition separates geometric variation in the dominant subspace into two components. Radial variation controls the amplitude or energy of each block, while phase variation controls the directional structure within each block. Since the phase variable lies on the periodic domain \([-\pi,\pi)\), it naturally has circular geometry. Consequently, directly modeling phase variables with ordinary Euclidean Gaussian noise is not fully appropriate; a more natural choice is to use wrapped Gaussian noise or other periodic distributions.

\subsubsection{Phase Marginals and Phase Couplings}
\label{app:phase-marginals-couplings}

The phase distribution of the target image modality contains two types of information. The first is the marginal phase preference of each two-dimensional block, which can be represented by the circular mean
\[
\bar{\psi}_k
=
\arg\!\left(
\mathbb{E}[e^{i\theta_k^{(x)}}]
\right).
\]
If \(\theta_k^{(x)}\) is close to uniformly distributed on the circle, then \(\left|\mathbb{E}[e^{i\theta_k^{(x)}}]\right|\) is close to \(0\). If it is concentrated around a certain direction, this magnitude becomes large.

The second type of information is the dependency between phase differences across different blocks. We define
\[
M_{k\ell}^{(x)}
=
\mathbb{E}\!\left[
e^{i(\theta_k^{(x)}-\theta_{\ell}^{(x)})}
\right].
\]
The magnitude \(|M_{k\ell}^{(x)}|\) measures the consistency of the phase difference between the \(k\)-th and \(\ell\)-th blocks, while \(\arg(M_{k\ell}^{(x)})\) gives the empirical phase offset.

Therefore, the marginal anchors \(\bar{\psi}_k\), block weights \(\alpha_k\), pairwise coupling strengths \(A_{k\ell}\), and phase offsets \(\eta_{k\ell}\) constructed in Stage I can be interpreted as low-order statistics of the target image modality in phase space.

\subsubsection{Periodic Potential and Drift Field}
\label{app:periodic-potential}

Based on the marginal and pairwise phase statistics above, we define the periodic potential
\[
\Psi(\phi)
=
\sum_k
\alpha_k
\left[
1-\cos(\phi_k-\bar{\psi}_k)
\right]
+
\sum_{(k,\ell)\in E}
A_{k\ell}
\left[
1-\cos(\phi_k-\phi_{\ell}-\eta_{k\ell})
\right].
\]
This potential contains two types of terms. The first encourages each phase variable to approach its target-modality marginal phase anchor. The second encourages phase differences between blocks to follow the dependency structure observed in the target modality.

Taking the gradient with respect to \(\phi_k\) gives the drift field used in the main text:
\[
[\nabla_{\phi}\Psi(\phi)]_k
=
\alpha_k\sin(\phi_k-\bar{\psi}_k)
+
\sum_{\ell:(k,\ell)\in E}
A_{k\ell}
\sin(\phi_k-\phi_{\ell}-\eta_{k\ell}).
\]
Thus, the periodic phase prior in Stage I can be interpreted as a local geometric constraint field in periodic phase space, constructed from the internal phase statistics of the target image modality.

\subsubsection{Wrapped Gaussian Score Prior}
\label{app:wrapped-gaussian-score}

Because phase variables are periodic, perturbed phases must be mapped back to \([-\pi,\pi)\) through a wrap operation. The wrapped Gaussian provides a natural local noise model on periodic domains.

Given a phase vector \(\phi\), we first construct the drifted phase center
\[
\mu_{\phi}
=
\mathrm{wrap}\!\left(
\phi-\tau\nabla_{\phi}\Psi(\phi)
\right).
\]
Then we sample a perturbed phase vector
\[
\tilde{\phi}
=
\mathrm{wrap}\!\left(
\mu_{\phi}
+
\sqrt{2}\sigma_t\epsilon
\right),
\qquad
\epsilon\sim \mathcal{N}(0,I).
\]
The phase score prior \(s_{\phi}\) is trained to predict the score of this wrapped Gaussian distribution in phase space. As a result, \(s_{\phi}\) does not directly learn a text-to-image mapping; instead, it learns the internal periodic phase structure of the target image modality.

\subsubsection{Implication}
\label{app:phase-prior-implication}

The Stage-I phase prior is not an arbitrary additional module. It follows from three observations: the residual modality gap is direction-dependent; the dominant subspace can be organized into two-dimensional blocks and expressed in polar coordinates; and the target modality exhibits estimable internal statistics over phase variables and phase differences. Therefore, the periodic phase prior provides a target-modality geometric constraint for the Stage-II bounded correction. It guides the source representation toward target-modality compatibility without relying on unconstrained distribution matching.

\section{Experiment Details} \label{app:2}

\subsection{Setting}

\ding{182}. \textbf{Geometric Level.} We follow the diagnostic setting in Sec.~\ref{sec:3}. The target modality is the image representation set \(X\), and the source modality is the text representation set \(Y\). Given paired image-text representations \(\{(x_i,y_i)\}_{i=1}^{n}\), where \(x_i\in X\), \(y_i\in Y\), and the two representations are semantically matched, all embeddings are evaluated in the shared normalized representation space and are \(\ell_2\)-normalized before metric computation. To avoid leaking pairwise correspondence into the alignment process, we separate the data into two parts. The first part is used as the statistic-estimation set, where only the marginal distributions of image and text representations are used to estimate the statistics required by each method, such as means, covariance-related quantities, subspaces, or residual structures. No image-text pairing information is used in this stage. The second part is used as a held-out paired diagnostic set. It is used only for evaluation. For any alignment method \(T\), we transform each source representation \(y_i\) into a target-modality substitute representation \(z_i=T(y_i)\), and evaluate the relation among the original source representation \(y_i\), the transformed representation \(z_i\), and the target representation \(x_i\) on the same held-out pairs. Unless otherwise specified, all metrics are computed on 10K held-out paired samples, and \(k=20\) is used for nearest-neighbor-based metrics.

\ding{183}. \textbf{MLLM Level.} We use Llama-3-8B-Instruct as the language-model backbone and connect modality features to the LLM through a two-layer MLP projector with GELU activation. In our setting, the aligned text representations are treated as substitute visual tokens. These text-induced representations are first projected by the MLP into the LLM embedding space and then used as visual-style inputs for multimodal training. The training procedure follows a two-stage pipeline. \ding{182} Modality Substitution Pretraining, we train only the projector on the filtered Bunny-1M dataset for 1 epoch, with the LLM frozen. The learning rate is set to \(5\times10^{-4}\). \ding{183} Visual Instruction-Tuning, we initialize the projector from the first stage and conduct full-parameter fine-tuning on InternVL-Chat-V1.2 for 1 epoch. The learning rate is reduced to \(1\times10^{-5}\). All experiments are performed on 8 NVIDIA H200 GPUs. With approximately 2.2M training samples in total, the full training pipeline takes about 12 hours.

\subsection{Metrics}

\textbf{Source Modality.} This evaluation does not rely on downstream tasks or human semantic labels. Instead, it directly measures self-consistency in the representation space. We use the original source representations \(Y\) as the semantic reference and compare them with the transformed representations \(Z=T(Y)\). \ding{182} First, we measure instance-level semantic consistency: \(\Phi(T)=\frac{1}{n}\sum_{i=1}^{n}y_i^\top z_i\). Since all representations are normalized, \(y_i^\top z_i\) is the cosine similarity between the original source representation and its transformed substitute. A larger \(\Phi(T)\) indicates that the transformed representation remains close to its original source semantic position. \ding{183} Second, we measure whether the relative geometry within the source modality is preserved. For a randomly sampled pair set \(\mathcal P\), we define \(\Psi(T)=\operatorname{corr}(\{y_i^\top y_j\}_{(i,j)\in\mathcal P},\{z_i^\top z_j\}_{(i,j)\in\mathcal P})\). A larger \(\Psi(T)\) means that pairwise semantic relations in the source modality remain stable after transformation. \ding{184} Third, we measure local neighborhood consistency. Let \(\mathcal N_k^Y(y_i)\) denote the \(k\)-nearest-neighbor set of \(y_i\) in the original source space \(Y\), and let \(\mathcal N_k^Z(z_i)\) denote the \(k\)-nearest-neighbor set of the same sample after transformation. We define \(\Omega_k(T)=\frac{1}{n}\sum_{i=1}^{n}|\mathcal N_k^Y(y_i)\cap \mathcal N_k^Z(z_i)|/k\). A larger \(\Omega_k(T)\) indicates that local semantic neighborhoods are better preserved after alignment.

\textbf{Target Modality.} Second, we evaluate local modality mixing, which captures whether \(X\) and \(Z\) are locally interleaved rather than only globally close. Let \(\mathcal Q=X\cup Z\). For any \(u\in\mathcal Q\), let \(\mathcal N_k(u)\) denote its \(k\)-nearest-neighbor set in \(\mathcal Q\setminus\{u\}\), and define the local target-modality proportion as \(p_k(u)=\frac{1}{k}\sum_{v\in\mathcal N_k(u)}\mathbf 1[v\in X]\). Here, \(\mathbf 1[v\in X]\) is not a semantic label, but a modality-origin indicator. We use the binary entropy \(H_2(p)=-p\log_2p-(1-p)\log_2(1-p)\) to measure local modality mixing, and normalize the score by the expected value under random permutation of modality-origin labels. We report two directional scores: \(M_k^Z(T)\) and \(M_k^X(T)\). \(M_k^Z(T)\) measures whether transformed source representations enter the support region of the target modality, while \(M_k^X(T)\) measures whether the target-modality support is covered by transformed source representations. For each method, we define \(r_i^T=x_i-z_i\), and compute the residual covariance \(\Sigma_r^T=\frac{1}{n}\sum_{i=1}^{n}r_i^T(r_i^T)^\top\). We examine the normalized residual eigenvalue spectrum \(\lambda_j(\Sigma_r^T)/\operatorname{tr}(\Sigma_r^T)\), and compute the residual anisotropy ratio \(A_r(T)=\lambda_{\max}(\Sigma_r^T)/(\operatorname{tr}(\Sigma_r^T)/d)\). A smaller \(A_r(T)\) and a less concentrated residual spectrum indicate that the method better suppresses the dominant anisotropic residual directions identified in Sec.~\ref{subsec:3.2}.

\subsection{Baselines}

\textbf{Unicorn.} Unicorn is a text-only data synthesis framework for VLM training. It constructs multimodal training data without real images through a three-stage pipeline: diverse caption synthesis, instruction-tuning data generation, and modality representation transfer. In particular, Unicorn first expands sparse caption seeds into diverse captions, then generates instruction-tuning data from these captions, and finally transfers text representations encoded by LLM2CLIP into the visual representation space to obtain synthetic image representations. In our experiments, we use Unicorn as a text-only synthetic visual representation baseline, following its modality representation transfer setting to construct pseudo-visual features from text.

\textbf{C$^3$ Align.} C$^3$ is a simple training-free modality-gap correction baseline built on the Connect-Collapse-Corrupt principle. The Connect step assumes that multimodal contrastive learning has already placed related concepts from different modalities into a shared representation space. The Collapse step removes the dominant modality gap by subtracting modality-specific embedding means. The Corrupt step injects Gaussian noise as regularization to improve robustness under alignment noise. In our setting, given a source text representation \(y\), we first shift it toward the target image centroid as \(y_\mu=y-\mu_y+\mu_x\), then add Gaussian perturbation and normalize the result to obtain the aligned substitute representation.

\textbf{ReAlign.} ReAlign is a training-free statistical alignment baseline that maps source-modality representations into the target-modality distribution using low-order statistics estimated from unpaired data. It consists of three closed-form steps: Anchor Alignment, Trace Alignment, and Centroid Alignment. First, Anchor Alignment removes the first-order mean bias by shifting the centered source representation to the target centroid. Second, Trace Alignment rescales the centered source residual using a global trace-matching factor, thereby matching the target residual energy while preserving the source covariance structure. Finally, after spherical projection, Centroid Alignment corrects the induced centroid drift and re-normalizes the representation on the unit sphere. In our experiments, we apply this operator to text representations to obtain ReAlign substitute visual representations.

\subsection{Evaluation Setting}

We evaluate the model on a broad set of multimodal benchmarks covering three aspects of visual understanding. \ding{182} For \textbf{General Perception}, we use MME \cite{fu2023mme} test, MMStar \cite{chen2024we}, ScienceQA \cite{lu2022learn}-image dev\&test, and RealWorldQA. For \ding{183} \textbf{Complex reasoning}, we evaluate on MMMU \cite{yue2024mmmu} validation single-image, MMMU-Pro\cite{yue2025mmmu} single-image, VisuLogic \cite{xu2025visulogic} train, and LogicVista \cite{xiao2024logicvista}. For \ding{184} \textbf{Hallucination assessment}, we use CRPE \cite{wang2023all}, POPE \cite{li2023evaluating}, and HallusionBench \cite{guan2024hallusionbench}. Across all benchmarks, we report accuracy (acc) as the unified evaluation metric, enabling a consistent comparison among different methods.




\section{Applicability}

Our analysis and method are built on the premise that the source and target modalities are embedded into a shared normalized representation space produced by a pretrained multimodal contrastive encoder. In this setting, the modality gap is assumed to arise within an already semantically compatible space: the two modalities share dominant geometric structure, while the remaining discrepancy appears as a structured anisotropic residual. This premise is important because AnisoAlign is designed to correct such residual geometric mismatch, rather than to align two arbitrary or unrelated distributions from scratch. Therefore, when the pretrained encoder fails to establish a meaningful shared semantic space, or when the source and target modalities do not exhibit compatible dominant geometry, the modality-gap structure studied in this work may become weak or absent, and explicit anisotropic correction may be less effective.

\end{document}